\theoremstyle{definition}
\newcommand{\argmin}{\operatornamewithlimits{argmin}}
\newtheorem{theorem}{Theorem}
\newtheorem{exmp}{Example}
\newtheorem{lemma}{Lemma}
\newtheorem{assumption}{Assumption}
\newcommand{\Break}{\textbf{break}}
\renewenvironment{abstract}
{\centerline{\large\bf Abstract}\vspace{0.7ex}%
  \bgroup\leftskip 20pt\rightskip 20pt\small\noindent\ignorespaces}%
{\par\egroup\vskip 0.25ex}
\newenvironment{keywords}
{\vspace{1em}\bgroup\leftskip 20pt\rightskip 20pt \small\noindent{\bf Keywords:} }%
{\par\egroup\vskip 0.25ex}
\title{Causal Gradient Boosting: \\ Boosted Instrumental Variable Regression}
\author[1]{Edvard Bakhitov}
\author[2]{Amandeep Singh}
\affil[1]{Department of Economics, University of Pennsylvania. E-mail: bakhitov@sas.upenn.edu}
\affil[2]{The Wharton School, University of Pennsylvania. E-mail: amansin@wharton.upenn.edu}
\date{\today}
\begin{document}

\maketitle

\begin{abstract}
Recent advances in the literature have demonstrated that standard supervised learning algorithms are ill-suited for problems with endogenous explanatory variables. To correct for the endogeneity bias, many variants of nonparameteric instrumental variable regression methods have been developed. In this paper, we propose an alternative algorithm called \textit{boostIV} that builds on the traditional gradient boosting algorithm and corrects for the endogeneity bias. The algorithm is very intuitive and resembles an iterative version of the standard 2SLS estimator. Moreover, our approach is data driven, meaning that the researcher does not have to make a stance on neither the form of the target function approximation nor the choice of instruments. We demonstrate that our estimator is consistent under mild conditions.  We carry out extensive Monte Carlo simulations to demonstrate the finite sample performance of our algorithm compared to other recently developed methods. We show that boostIV is at worst on par with the existing methods and on average significantly outperforms them. 
\end{abstract}

\begin{keywords}
	Causal Learning, Boosting, Instrumental Variables, Gradient Descent, Nonparametric
\end{keywords}

\section{Introduction}

Gradient boosting method is considered one of the leading  machine learning (ML) algorithms for supervised learning with structured data. There is a large body of evidence showing that gradient boosting dominates in a significant number of ML competitions conducted on Kaggle\footnote{For reference see \url{https://www.kaggle.com/dansbecker/xgboost}}. However, recent literature \citep[e.g., see][]{hartford2017} has shown that traditional supervised machine learning methods do not perform well in the presence of endogeneity in the explanatory variables. 

A common approach to correct for the endogeneity bias is to use instrumental variables (IVs). Nonparametric instrumental variables (NPIV) methods have regained popularity among applied researchers over the last decade as they do not require imposing (possibly) implausible parametric assumptions on the target function. On the other hand, existing nonparameteric estimation techniques require the researcher to specify a target function approximation (ideally driven by some ex-ante understanding of the data generating process), e.g. a sieve space, which in turn drives the choice of unconditional moment restrictions (or simply put, the choice of IV basis functions). If the approximation is bad, it will lead to misspecification issues, and if the IVs are ``weak''\footnote{By ``weak'' we mean that the IV basis functions are weakly correlated with the target function basis functions. It is hard to define a notion of weak instruments in the NPIV set-up of \cite{newey_powell2003}, since there is no explicit reduced form. In the triangular simultaneous equations models, where the explicit reduced form exists, \cite{han2014} defined weak IVs as a sequence of reduced-form functions where the associated rank shrinks to zero.}, most likely the standard NPIV asymptotic techniques will no longer be valid. Moreover, the complexity of both modelling and estimation explodes when there are more than a handful of inputs.

In this paper, we introduce an algorithm that allows to learn the target function in the presence of endogenous explanatory variables in a data driven way, meaning that the researcher does not have to make a stance on neither the form of the target function approximation nor the choice of instruments. We build on gradient boosting algorithm to transform the standard NPIV problem into a learning problem that accounts for endogeneity in explanatory variable, and thus, we call our algorithm \textit{boostIV}.

We also consider a couple extensions to the boostIV algorithm that might improve its finite sample performance. First, we show how to incorporate optimal IVs, i.e. IVs that achieve the lowest asymptotic variance \citep{chamberlain1987asymptotic}. Second, we augment the boostIV algorithm with a post-processing step where we re-estimate the weights on the learnt basis functions, we call this algorithm \textit{post-boostIV}. The idea is based on \cite{friedman_popescu2003} who propose to learn an ensemble of basis functions and then apply lasso to perform basis function selection.

To avoid potentially severe finite sample bias due to the double use of data, we resort to the cross-fitting idea of \cite{chernozhukov2017dml}. For the boostIV algorithm we split the data to learn instruments and basis functions on different data folds. We add an additional layer of cross-fitting to the post-boostIV algorithm to update the weights on the learnt basis functions.

Our method has a number of advantages over the standard NPIV approach. First, our approach allows the researcher to be completely agnostic to the choice of basis functions and IVs. Both basis functions and instruments are learnt in a data driven way which picks up the underlying data structure. Second, the method becomes even more attractive when the dimensionality of the problem grows, as the standard NPIV methods suffer greatly from the curse of dimensionality. Intuitively, learning via boosting should be able to construct basis functions that approximately represent the underlying low dimensional data features. However, our approach does not work in purely high-dimensional settings where the number of regressors exceeds the number of observations. 

We study the performance of boostIV and post-boostIV algorithms in a series of Monte Carlo experiments. We compare the performance of our algorithms to both the standard sieve NPIV estimator and a variety of modern ML estimators.  Our results demonstrate that boostIV performs at worst on par with the state of the art ML estimators. Moreover, we find no empirical evidence that post-boostIV achieves superior performance compared to boostIV and vice versa. However, adding the post-processing step reduces the amount of boosting iterations needed for the algorithm to converge rendering it (potentially) computationally more efficient\footnote{To be more precise, there is a trade-off at play. One boostIV iteration takes less time than one post-boostIV iteration as the latter algorithm includes an additional estimation step plus one more layer of cross-fitting. As a result, if adding the post-processing step reduces the amount of boosting iterations significantly, then we achieve computational gains. It might not be the case otherwise.}. 

This paper brings together two strands of literature. First, our approach contributes to the literature on nonparametric instrumental variables modeling. \cite{newey_powell2003} propose to replace the linear relationships in standard linear IV regression with linear projections on a series of basis functions (also see \cite{blundell2007_SNPIV} for an application to Engel-curve estimation).  \cite{darolles2011nonparametric} and \cite{hall2005nonparametric} suggest to nonparametrically estimate the conditional distribution of endogenous regressors given the instruments, $F(x|z)$, using kernel density estimators. However, despite their simplicity and flexibility, both approaches are subject to the curse of dimensionality. Machine learning literature has recently also contributed to the nonparametric IV literature. \cite{hartford2017} propose a DeepIV estimator which first estimates $F(x|z)$ with a mixture of deep generative models on which then the structural function is learned with another deep neural network. Kernel IV estimator of \cite{singh2019kiv} exploits conditional mean embedding of $F(x|z)$, which is then used in the second stage kernel ridge regression. \cite{muandet2019} avoid the traditional two stage procedure by focusing on the dual problem and fitting just a single kernel ridge regression.

Second, we exploit insights from the boosting literature. Originally boosting came out as an ensemble method for classification in the computational learning theory \citep{schapire1990strength,freund1995boosting, freund1997AdaBoost}. Later on \cite{friedman2000additive} draw connections between boosting and statistical learning theory by viewing boosting as an approximation to additive modeling. A different perspective on boosting as a gradient descent algorithm in a function space that connects boosting to the more common optimization view of statistical inference \citep{breiman1998arcing,breiman1999prediction,friedman2001greedy}. $L_{2}$-boosting introduced by \cite{buhlmann2003L2boost} provides a powerful tool to learning regression functions. A comprehensive boosting review can be found in \cite{buhlmann2007boosting}.

The remainder of the paper is organized as follows. Section \ref{sec:setup} briefly introduces the NPIV framework. Section \ref{sec:boost_review} describes the standard boosting procedure. We present boostIV and post-boostIV in Section \ref{sec:boost_npiv}. Section \ref{sec:cv} talks about hyperparameter tuning. Section \ref{sec:consistency} discusses consistency. We illustrate the numerical performance of our algorithms in Section \ref{sec:mc}. Section \ref{sec:conclusion} concludes. All the proofs and mathematical details are left for the Appendix.

\section{Set-up} \label{sec:setup}

Consider the standard conditional mean model of \cite{newey_powell2003}
\begin{equation} \label{eq:model_eq}
	y = g(x) + \varepsilon, \quad \mathbb{E}[\varepsilon|z] = 0,
\end{equation}
where $y$ is a scalar random variable, $g$ is an unknown structural function of interest, $x$ is a $d_{x} \times 1$ vector of (potentially) endogenous explanatory variables, $z$ is a $d_{z} \times 1$ vector of instrumental variables, and $\varepsilon$ is an error term\footnote{The approach can easily be extended to cases where only some of the regressors are endogenous. Suppose $x = (x_{1},\,x_{2})$ where $x_{1}$ consists of endogenous regressors and $x_{2}$ is a vector of exogenous regressors. Let $w$ be a vector of excluded instruments and set $z = (w,\,x_{2})$. This perfectly fits into the model described by \eqref{eq:model_eq}.}. Suppose that the model is identified and the completion condition holds, i.e. for all measurable real functions $\delta$ with finite expectation,
\begin{equation*}
	\mathbb{E}[\delta(x)|z] = 0 \Rightarrow \delta(x) = 0.
\end{equation*}
Intuitively this condition implies that there is enough variation in the instruments to explain the variation in $x$.

The conditional expectation of \eqref{eq:model_eq} yields the integral equation
\begin{equation} \label{eq:cond_exp}
	\mathbb{E}[y|z] = \int g(x)dF(x|z),
\end{equation}
where $F$ denotes the conditional cdf of $x$ given $z$. Solving for $g$ directly is an ill-posed problem as it involves inverting linear compact operators (see e.g. \cite{kress1989}). Note that the model in \eqref{eq:model_eq} does not have an explicit reduced form, i.e. a functional relationship between endogenous and exogenous variables, however, it is implicitly embedded in $F$. Thus, from the estimation perspective we have two objects to estimate: (i) the conditional cdf F(x|z) and (ii) the structural function $g$.

A common approach in applied work is to assume that the relationships between $y$ and $x$ as well as $x$ and $z$ are linear, which leads to a standard 2SLS estimator. However, it can be a very restrictive assumption in practice, which can result in misspecification bias. A lot of more flexible non-parametric extension to 2SLS have been developed in the econometrics literature. The standard approach is ti use the series estimator of \cite{newey_powell2003} who propose to replace the linear relationships with a linear projections on a series of basis functions. 

To illustrate the approach let us approximate $g$ with a series expansion
\begin{equation*}
	g(x) \approx \sum_{\ell=1}^{L} \gamma_{\ell} p_{\ell}(x),
\end{equation*}
where $p^{L}(x) = (p_{1}(x),\dots,p_{L}(x))$ is a series of basis functions. It allows us to rewrite the conditional expectation of $y$ given $z$ as
\begin{equation} \label{eq:cond_exp}
	\mathbb{E}[y|z] \approx \sum_{\ell=1}^{L} \gamma_{\ell}\mathbb{E}[p_{\ell}(x)|z].
\end{equation}
Let $q^{K}(z) = (q_{1}(z),\dots,q_{K}(z))$ be a series of IV basis functions. This implies a 2SLS type estimator of $\gamma$
\begin{equation}
	\hat{\gamma} = \left(\hat{\mathbb{E}}[p^{L}(x)|z]'\hat{\mathbb{E}}[p^{L}(x)|z]\right)^{-}\hat{\mathbb{E}}[p^{L}(x)|z]'y,
\end{equation}
where $\hat{\mathbb{E}}[p^{L}(x)|z] = q^{K}(z)\left(q^{K}(z)'q^{K}(z)\right)^{-}q^{K}(z)'p^{L}(x)$. Given $L,K \rightarrow \infty$ as $n \rightarrow \infty$, asymptotically one can recover the true structural function. However, in finite samples one has to truncate the sieve at some value. Despite that, the performance of the estimator hinges crucially on the choice of the approximating space, especially in high dimensions. Moreover, NPIV estimators suffer greatly from the curse of dimensionality which renders them inapplicable even in many applications. Alternatively, we propose a data-driven approach, which is agnostic to the choice of sieve/approximating functions.

\section{Revisiting Gradient Boosting} \label{sec:boost_review}
 
Boosting is a greedy algorithm to learn additive basis function models of the form
\begin{equation} \label{eq:boost_def}
	f(x) = \alpha_{0} + \sum_{m=1}^{M}\alpha_{m}\varphi(x;\theta_{m}),
\end{equation}
where $\varphi_{m}$ are generated by a simple algorithm called a \textit{weak learner} or \textit{base learner}. The weak learner can be any classification or regression algorithm, such as a regression tree, a random forest, a simple single-layer neural network, etc. One could boost the performance (on the training set) of any weak learner arbitrarily high, provided the weak learner could always perform slightly better than chance\footnote{This is relevant when applied to classification problems. For regression problems any simple method such as least squares regression, regression stump, or one or two-layered neural network will work.} \citep{schapire1990strength, freund1996experiments}. It is a very nice feature, since the only thing we need to make a stance on is the form of the weak learner, which is much less restrictive than choosing a sieve. 

The goal of boosting is to solve the following optimization problem
\begin{equation} \label{eq:boosting_problem}
	\min_{f} \sum_{i=1}^{N} L(y_{i},f(x_{i})),
\end{equation}
where $L(y,y')$ is a loss function and $f$ is defined by \eqref{eq:boost_def}. Since the boosting estimator depends on the choice of the loss function, the algorithm to solve \eqref{eq:boosting_problem} should be adjusted for a particular choice. Instead, one can use a generic version called \textit{gradient boosting} \citep{friedman2001greedy, mason2000boosting}, which works for an arbitrary loss function. 

\cite{breiman1998arcing} showed that boosting can be interpreted as a form of the gradient descent algorithm in function space. This idea then was further extended by \cite{friedman2001greedy} who presented the following functional gradient descent or gradient boosting algorithm:

\begin{enumerate}
    \item Given data $\{(y_{i},\,x_{i})\}_{i=1}^{n}$, initialize the algorithm with some starting value. Common choices are
    \begin{equation*}
	    f_{0}(x) \equiv \argmin_{c} \sum_{i=1}^{N} L(y_{i},c),
    \end{equation*}
    which is simply $\bar{y}$ under the squared loss, or $f_{0}(x) \equiv 0$. Set $m = 0$.
    \item Increase $m$ by 1. Compute the negative gradient vector and evaluate it at $f_{m-1}(x_{i})$:
    \begin{equation*}
        r_{im} = - \left.\frac{\partial L(y_{i},\,f)}{\partial f}\right\vert_{f = f_{m-1}(x_{i})}, \quad i = 1,\dots,\,n.
    \end{equation*}
    \item Use the weak learner to compute $(\alpha_{m},\,\theta_{m})$ which minimize $\sum_{i=1}^{N} (r_{im} - \alpha \phi(x_{i};\theta))^{2}$.
    \item Update 
    \begin{equation*}
	    f_{m}(x) = f_{m-1}(x) + \alpha_{m} \phi(x;\theta_{m}),
    \end{equation*}
    that is, proceed along an estimate of the negative gradient vector. In practice, better (test set) performance can be obtained by performing ``partial updates'' of the form
    \begin{equation*}
	    f_{m}(x) = f_{m-1}(x) + \nu \alpha_{m} \phi(x;\theta_{m}),
    \end{equation*}
    where $0 \leq \nu \leq 1$ is a shrinkage parameter, usually set close to zero \citep{friedman2001greedy}. 
    \item Iterate steps 2 to 4 until $m = M$ for some stopping iteration $M$.
\end{enumerate}

The key point is that we do not go back and adjust earlier parameters. The resulting basis functions learnt from the data are $\phi(x) = (\phi(x;\theta_{1}),\dots,\phi(x;\theta_{M}))$. The number of iterations $M$ is a tuning parameter, which can be optimally tuned via cross-validation or some model selection criterion (see Section \ref{sec:cv} for more details).

\section{Boosting the IV regression} \label{sec:boost_npiv}

The main complication in the NPIV set-up is that $x$ is potentially endogenous, otherwise learning the structural function via boosting would be straightforward. Moreover, we cannot learn basis functions in the first step and then construct IVs in the second. Dependence of the basis functions for the structural equation on the instruments and vice versa suggests an iterative algorithm. 

Before we introduce the algorithm, we need to set up the boosting IV framework first. Combining \eqref{eq:model_eq} and  \eqref{eq:boost_def} gives
\begin{equation} \label{eq:boost_struct_eq}
	y = \alpha_{0} + \sum_{m=1}^{M}\alpha_{m}\varphi(x;\theta_{m}) + \varepsilon.
\end{equation}
Hence, the conditional expectation of $y$ given $z$ becomes
\begin{equation} \label{eq:boost_cond_exp}
	\mathbb{E}[y|z] = \alpha_{0} + \sum_{m=1}^{M}\alpha_{m}\mathbb{E}[\varphi(x;\theta_{m})|z].
\end{equation}
Note that \eqref{eq:boost_struct_eq} and \eqref{eq:boost_cond_exp} closely resemble their standard NPIV counterparts \eqref{eq:model_eq} and \eqref{eq:cond_exp}. The only difference is that the form of basis functions for boosting must be estimated, while for the standard NPIV it has to be ex-ante specified. Unlike standard boosting, where the goal is to learn $\mathbb{E}[y|x]$, in the presence of endogeneity, we want to learn $\mathbb{E}[y|x]$, implying that in each boosting iteration we have to learn the conditional expectation of the weak learner given the IVs.

To keep things clear and simple, we focus on $L_{2}$-boosting which assumes the squared loss function. \cite{buhlmann2003L2boost} show that $L_{2}$-boosting is equivalent to iterative fitting of residuals. In the IV context, it means that at step $m$ the loss has the form
\begin{equation*}
	L(y, g_{m-1}(x) + \alpha\mathbb{E}[\varphi(x;\theta)|z]) = (r_{m} - \alpha\mathbb{E}[\varphi(x;\theta)|z])^2, 
\end{equation*}
where $r_{m} \equiv y - g_{m-1}$ is the current residual. Thus, at step $m$ the optimal parameters minimize the loss between the residuals and the conditional expectation of the weak learner given the instruments, 
\begin{equation} \label{eq:boostIV_iteration}
	(\alpha_{m},\,\theta_{m}) = \argmin_{\alpha, \theta} \sum_{i=1}^{N}(r_{im} - \alpha\mathbb{E}[\varphi(x_{i};\theta)|z_{i}])^2.
\end{equation}
However, the conditional expectation $\mathbb{E}[\varphi(x;\theta)|z]$ is unknown and has to be estimated. 

A simple way to estimate the conditional expectation in \eqref{eq:boostIV_iteration} is to project\footnote{In general we do not have to use a projection, we can use a more complex model to estimate the conditional expectation.} the weak learner on the space spanned by IVs
\begin{equation*}
	\hat{\mathbb{E}}[\varphi(x;\theta)|z] = P_{Z}\varphi(x;\theta),  
\end{equation*}
where $P_{A} = A(A'A)^{-1}A'$ is a projection matrix. The exogeneity condition in \eqref{eq:model_eq} implies that any function of $z$ can serve as an instrument. However, we do not need any function, we need such a transformation of $z$ that will give us strong instruments, i.e. instruments that explain the majority of the variation in the endogenous variables. We follow \cite{gandhi2019cross_learner} and introduce an additional step on which we learn the instruments. Let $\mathcal{H}(\cdot;\eta)$ be a class of IV functions parameterized by $\eta$. This formulation allows us to use various off-the-shelf algorithms such as Neural Networks, Random Forests, etc. to learn $\mathcal{H}(\cdot;\eta)$. Given the learnt IV transformation $\mathcal{H}(\cdot;\eta)$, we can rewrite \eqref{eq:boostIV_iteration} as
\begin{equation*}
	(\alpha_{m},\,\theta_{m}) = \argmin_{\alpha, \theta} \sum_{i=1}^{N}(r_{im} - \alpha P_{\mathcal{H}(z_{i};\eta)}\varphi(x_{i};\theta))^2.
\end{equation*}

Since the basis function parameters $(\alpha,\,\theta)$ depend on the IV transformation parameters $\eta$ and vice verse, we propose an algorithm that iterates between two steps. At the first step we learn instruments, i.e. $\eta_{m}$, given the basis functions parameter estimates from the previous iteration $(\alpha_{m-1},\,\theta_{m-1})$, then at the second step we learn new parameter estimates $(\alpha_{m},\,\theta_{m})$ given the instruments from the first step. We can draw an analogy with the canonical two-stage least squares, where we estimate the reduced form in the first stage, and the structural equation in the second. The details are provided in Algorithm \ref{algo:naive}.

\begin{algorithm}[!h]
	\caption{Naive boostIV} 
		Initialize basis functions: $\varphi_{0} = \bar{y}$\\
			\For {iteration $m$} {
				{\textbf{First stage:} given $\varphi(x;\theta_{m-1})$, estimate $\mathcal{H}(z;\eta_{m})$ \\
				\textbf{Second stage:} given $\mathcal{H}(z;\eta_{m})$, solve
				\begin{equation*}
					(\alpha_{m},\,\theta_{m}) = \argmin_{\alpha, \theta} \sum_{i=1}^{N}\left(r_{im} - \alpha P_{\mathcal{H}(z_{i};\eta_{m})}\varphi(x_{i};\theta)\right)^2
				\end{equation*} \\
				\textbf{update:} $g_{m}(x) = g_{m-1}(x) + \alpha_{m}\varphi(x;\theta_{m})$}
				}
		 Stop at iteration $M$
	\label{algo:naive}
\end{algorithm}

We call this algorithm the Naive boostIV, since we use the same data to learn both the instruments and the basis functions. Asymptotically this will not affect the properties of the estimator, however, in finite samples biases from the first stage will propagate to the second. This issue can be especially severe if we use regularized estimators in the first stage as the regularization bias will heavily affect the second stage estimates. To get around this issue we resort to cross-fitting.

Let $D = \{y_{i},x_{i},z_{i}\}_{i = 1}^{n}$ be our data set, where $D_{i}$ are iid. Split the data set into a $K$-fold partition, such that each partition $D_{k}$ has size $\left\lfloor \frac{n}{K} \right\rfloor$, and let $D_{k}^{c}$ be the excluded data. The boostIV procedure with cross-fitting is described in Algorithm \ref{algo:boost_cf}.

\begin{algorithm}
	\caption{boostIV with cross-fitting} 
		Folds $\{\mathcal{D}_{1},\,\dots,\,\mathcal{D}_{K}\}\leftarrow$ \textsc{Partition}$(\mathcal{D},\,K)$ \\
	    Initialize basis functions: $\varphi_{0}^{k} = \bar{y}$ for $k=1,\,\dots,\,K$ \\
			\For {iteration $m$}{ 
			\For {fold $k$}{
				\textbf{First stage:} 
				\begin{itemize}
					\item given $\varphi(x_{k}^{c};\theta^{k}_{m-1})$ and $z_{k}^{c}$, estimate $\mathcal{H}(\cdot;\eta^{k}_{m})$
					\item apply the learnt transformation to generate  IVs $\mathcal{H}(z_{k};\eta^{k}_{m})$
				\end{itemize}
				\textbf{Second stage:} Given $\mathcal{H}(z_{k};\eta^{k}_{m})$, solve
				\begin{equation*}
					(\alpha^{k}_{m},\,\theta^{k}_{m}) = \argmin_{\alpha, \theta} \sum_{i \in \mathcal{D}_{k}} \left(r_{im} - \alpha P_{\mathcal{H}(z_{i};\eta^{k}_{m})}\varphi(x_{i};\theta)\right)^2
				\end{equation*} \\
				\textbf{update:} $g^{k}_{m}(x_{k}) = g^{k}_{m-1}(x_{k}) + \alpha^{k}_{m}\varphi(x_{k};\theta^{k}_{m})$
			}
			}
		 Stop at iteration M \\
		 \KwOut{$\hat{g}(x) = \frac{1}{K}\sum_{k=1}^{K}g_{M}^{k}(x)$}
	\label{algo:boost_cf}
\end{algorithm}

\subsection{Learning optimal instruments}

Our boostIV algorithm also allows to incorporate optimal instruments in the sense of \cite{chamberlain1987asymptotic}, i.e. instruments that achieve the smallest asymptotic variance.  Assuming conditional homoskedasticity, the optimal instrument vector of \cite{chamberlain1987asymptotic} at step $m$ is
\begin{equation} \label{eq:opt_chamb}
	\mathcal{H}(z;\eta_{m}) = D_{m}(z)\sigma_{m}^{-2},
\end{equation}
where 
\begin{equation} \label{eq:opt_deriv}
	D_{m}(z) = \mathbb{E}\left[\left.\frac{\partial \varepsilon(\gamma_{m})}{\partial \gamma_{m}'}\right\vert z\right], \quad \gamma_{m}=(\alpha_{m},\,\theta_{m}')'
\end{equation}
is the conditional expectation of the derivative of the conditional moment restriction with respect to the boosting parameters, and $\sigma^{2}_{m} = \mathbb{E}[r_{m}^{2}|z]$ is the conditional variance of the error term at step $m$. Thus, the IV transformation parameters $\eta_{m}$ are implicitly embedded in a particular approximation used to estimate $D_{m}(z)$. 

The main complication with using optimal IVs is that they are generally unknown, hence, the common approach is to consider approximations. The parametrization in \eqref{eq:opt_chamb}-\eqref{eq:opt_deriv} allows us to use any off-the-shelf statistical/ML method to estimate the optimal functional form for the instruments. Moreover, the iterative nature of the algorithm allows us to use the estimates from step $m-1$ as proxies. 

\subsection{Post-processing}

An important feature of the forward stage-wise additive modeling is that we do not go back and adjust earlier parameters. However, we might want to revisit the weights on the learnt basis functions to achieve a better fit. This can be seen as a way of post-processing our boostIV procedure. 

The whole procedure can be broken down into two stages:
\begin{enumerate}
	\item Apply the boostIV algorithm to learn basis functions $\hat{\varphi}_{m}(x) = \frac{1}{K}\sum_{k=1}^{K}\varphi(x;\theta_{m}^{k})$ for $m=1,\,\dots,\,M$;
	\item Estimate the weights
	\begin{equation} \label{eq:post_boost_no_cf}
		\hat{\beta} = \argmin_{\beta}\sum_{i=1}^{n}\left(y_{i} - \beta_{0} - \sum_{m=1}^{M}\beta_{m}\hat{\varphi}_{m}(x_{i})\right)^2.
	\end{equation}
\end{enumerate}
Note that the basis functions $(\hat{\varphi}_{1}(x),\,\dots,\,\hat{\varphi}_{M}(x))$ are causal in the sense that they are constructed using estimated parameters $\theta$ that identify a causal relationship between $x$ and $y$. 

Boosting is an example of an ensemble method which combines various predictions with appropriate weights to get a better prediction. In the context of boostIV it works in the following way. We exploit the variation in IVs to get causal parameters $\theta$. Given the estimated parameters, we can treat each learnt basis function $\varphi(x;\theta_{m})$ as a separate prediction obtained by fitting a base learner. Then, the post-processing step in \eqref{eq:post_boost_no_cf} can be simply seen as model averaging.

We can estimate optimal weights $\hat{\beta}$ by simply running a least squares regression as in \eqref{eq:post_boost_no_cf} or use any other method such as Random Forrests, Neural Networks, boosting, etc. To avoid carrying over any biases from the estimation of $(\hat{\varphi}_{1}(x),\,\dots,\,\hat{\varphi}_{M}(x))$ into the choice of $\beta$, we use cross-fitting once again, which is a generalization of the stacking idea of \cite{wolpert1992}. 

\begin{algorithm}[!h]
	\caption{post-boostIV} 
		Folds $\{\mathcal{D}_{1},\,\dots,\,\mathcal{D}_{L}\}\leftarrow$ \textsc{Partition}$(\mathcal{D},\,L)$ \\
			\For {fold $\ell$} {
			\begin{enumerate}
				\item[1.] apply boostIV to $\mathcal{D}_{\ell}^{c}$ and estimate basis functions $(\hat{\varphi}^{\ell}_{1}(x),\,\dots,\,\hat{\varphi}^{\ell}_{M}(x))$
				\item[2.] estimate post-boosting weights 
				\begin{equation*}
					\hat{\beta}^{\ell} = \argmin_{\beta}\sum_{i \in \mathcal{D}_{\ell}}\left(y_{i} - \beta_{0} - \sum_{m=1}^{M}\beta_{m}\hat{\varphi}^{\ell}_{m}(x_{i})\right)^2.
				\end{equation*}
				\item[3.] fold fit at point $x$: $g^{\ell}(x) = \hat{\beta}^{\ell}_{0} + \sum_{m=1}^{M}\hat{\beta}^{\ell}_{m}\hat{\varphi}^{\ell}_{m}(x)$
			\end{enumerate}	
			}
		 Stop at iteration $M$ \\
		 \KwOut{$\hat{g}(x) = \frac{1}{L}\sum_{\ell=1}^{L}g_{M}^{\ell}(x)$}
	\label{algo:post-boost}
\end{algorithm}

\section{Choosing the optimal number of boosting iterations} \label{sec:cv}

Boosting performance crucially depends on the number of boosting iterations, in other words, $M$ is a tuning parameter. A common way to tune any ML algorithm is cross-validation (CV). The most popular type of CV is $k$-fold CV. The idea behind $k$-fold CV is to create a number of partitions (validation datasets) from the training dataset and fit the model to the training dataset (sans the validation data). The model is then evaluated against each validation dataset and the results are averaged to obtain the cross-validation error. In application to boosting, we can estimate the CV error for a grid of candidate tuning parameters (number of iterations) and pick $M^{*}$ that minimizes the CV error. Alternatively, \cite{buhlmann2007boosting} show how to apply AIC and BIC criteria to boosting in the exogenous case. However, it is not clear how to adjust those criteria for the presence of endogeneity.

Both standard $k$-fold cross validation and the model selection criteria considered in \cite{buhlmann2007boosting} can be computationally costly as it is necessary to compute all boosting iterations under consideration for the training data. To surpass this issue, we apply \textit{early stopping} to $k$-fold CV. The idea behind early stopping is to monitor the behavior of the CV error and stop as soon as the performance starts decreasing, i.e. CV error goes up. 

Algorithm \ref{algo:CV_ES} provides implementation details for the $k$-fold CV with early stopping for either boostIV or post-boostIV procedure. The early stopping criterion compares the CV error evaluated for the model based on $M_j$ boosting iterations to the CV error evaluated for the model based on $M_i$, $M_i < M_j$. If $CV^{err}(M_j) > CV^{err}(M_i) + \epsilon$, where $\epsilon > 0$ but close to zero is a numerical error tolerance level, then we stop and set $M^{*} = M_i$, otherwise, continue the search. If the criterion is not met for any of the candidate tuning parameters, we pick the largest value $M^{*} = \bar{M}$.

An alternative solution would be to use a slice of the dataset as the validation sample and tune the number of iterations using the observations from the validation sample. We actually use this approach in our simulations since it significantly reduces the computational burden.
\begin{algorithm}[!h]
	\caption{$k$-fold CV with early stopping} 
		Folds $\{\mathcal{D}_{1},\,\dots,\,\mathcal{D}_{k}\}\leftarrow$ \textsc{Partition}$(\mathcal{D},\,k)$ \\
		Set of indices $\mathcal{I}_{M}$ corresponding to a sorted grid of tuning parameters $\mathcal{M} = \{1,\,\dots,\,\bar{M}\}$ \\
			\While {$\mathcal{M}[i] \leq \bar{M}$ for $i \in \mathcal{I}_{M}$} {
			\For {fold $\kappa=1,\,\dots,\,k$} {
				\begin{enumerate}
					\item[1.] training set $\mathcal{T}_{\kappa} = \mathcal{D}_{\kappa}^{c} \rightarrow$ apply (post-)boostIV$(\mathcal{T}_{\kappa},\,\mathcal{M}[i])$ $\rightarrow g_{\mathcal{M}[i],\kappa}^{boost}(x)$
					\item[2.] validation set $\mathcal{V}_{\kappa} = \mathcal{D}_{\kappa} \rightarrow CV^{err}_{\kappa}(\mathcal{M}[i]) = \frac{1}{|\mathcal{V}_{\kappa}|}\sum_{i \in \mathcal{V}_{\kappa}}\left(y_{i} - g_{\mathcal{M}[i],\kappa}^{boost}(x_{i})\right)^{2}$ 
				\end{enumerate}
			} %
			calculate $CV^{err}(\mathcal{M}[i]) = \frac{1}{k}\sum_{\kappa=1}^{k}CV^{err}_{\kappa}(\mathcal{M}[i])$ \\
			\eIf(\tcp*[r]{Early stopping criterion}) {$CV^{err}(\mathcal{M}[i])$ > $CV^{err}(\mathcal{M}[i-1]) + \epsilon$} {
				$M^{*} = \mathcal{M}[i-1]$ \\
				\Break \tcp*[r]{Break while loop if the criterion is met}
			}{$i =+ 1$
			} %
		 } %
		 $M^{*} = \bar{M}$ \\
		 \KwOut{$g^{boost}_{M^{*}}(x) \leftarrow$ (post-)boostIV$(\mathcal{D},\,M^{*})$}
	\label{algo:CV_ES}
\end{algorithm}

\section{Theoretical properties} \label{sec:consistency}

In this section, we show that under mild conditions boostIV is consistent. Theoretical properties of post-boostIV are beyond the scope of the paper and are left for future research.

We borrow the main idea from \cite{zhang_yu2005boostingES} and modify it accordingly to apply it to the GMM criterion. Let $g(W_{i}, f) = (y_{i} - f(x_i))z_{i}$ denote a $k \times 1$ moment function, then $g_{0}(f) = \mathbb{E}[g(W_{i}, f)]$ is the population moment function and $\hat{g}(f) = n^{-1}\sum_{i=1}^{n}g(W_{i}, f)$ is its sample analog. Also let $\Omega$ denote a $k \times k$ positive semi-definite weight matrix and $\hat{\Omega}$ be its sample analog. Thus, the population GMM criterion and its sample analog are 
\begin{equation} \label{eq:gmm_crit}
	Q(f) = g_{0}(f)'\Omega g_{0}(f), \quad \hat{Q}(f) = \hat{g}(f)'\hat{\Omega}\hat{g}(f).
\end{equation}
The form of the GMM criterion in \eqref{eq:gmm_crit} corresponds to the form of the empirical objective function in \cite{zhang_yu2005boostingES} with the loss function replaced by the moment function.

We follow \cite{zhang_yu2005boostingES} and replace the functional gradient decent step \eqref{eq:boostIV_iteration} leading to the 2SLS fitting procedure on every iteration with an approximate minimization involving a GMM criterion. We can do that since the 2SLS solution is a special case of a GMM solution with an appropriate weighting matrix.

\begin{assumption}
    \textbf{Approximate Minimization.} On each iteration step $m$ we find $\bar{\alpha}_{m} \in \Lambda_{m}$ and $\bar{g}_{m} \in \mathcal{S}$ such that
    \begin{equation} \label{eq:approx_min }
        \hat{Q}(f_{m} + \bar{\alpha}_{m}\bar{g}_{m}) \le \inf_{\alpha_{m}\in \Lambda_m g_{m} \in \mathcal{S}}\hat{Q}(f_{m} + \alpha_{m}g_{m}) + \epsilon_m,
    \end{equation}
    where $\epsilon_m$ is a sequence of non-negative numbers that converge to 0.
    \label{ass_1}
\end{assumption}

As \cite{zhang_yu2005boostingES} show, the consistency of the boosting procedure consists of two parts: (i) numerical convergence of the procedure itself, i.e. the algorithm achieves the true minimum of the objective function, and (ii) statistical convergence that insures the uniform convergence of the sample criterion to its population analog. We will treat these two steps separately in the following subsections, and then combine them to demonstrate consistency of the boostIV.

\subsection{Numerical Convergence}

To demonstrate numerical convergence, we first have to verify that the sample GMM criterion in \eqref{eq:gmm_crit} satisfies Assumption 3.1 from \cite{zhang_yu2005boostingES}.

Following \cite{zhang_yu2005boostingES}, I introduce some additional notation. Let $\mathcal{S}$ be a set of real-valued functions and define
\begin{equation*}
	\text{span}(\mathcal{S}) = \left\{ \sum_{j=1}^{J}w_{j}f_{j}:f_{j} \in \mathcal{S},\,w_{i}\in\mathbb{R},\,J\in\mathbb{Z}^{+} \right\},
\end{equation*}
which forms a linear function space. Also, for all $f \in \text{span}(S)$ define the 1-norm with respect to the basis $S$ as
\begin{equation*}
	||f||_{1} = \left\{ ||w||_{1}: f = \sum_{j=1}^{J}w_{j}f_{j}:f_{j} \in \mathcal{S},\,J\in\mathbb{Z}^{+} \right\}.
\end{equation*}

\begin{assumption} \label{ass:ass_crit}
	A convex function $A(f)$ defined on $\text{span}(\mathcal{S})$ should satisfy the following conditions:
\begin{enumerate}
	\item The functional $A$ satisfies the following Frechet-like differentiability condition
	\begin{equation*}
		\lim_{h \rightarrow 0} \frac{1}{h}(A(f + h\varphi) - A(f)) = \nabla A'\varphi
	\end{equation*}
	\item For all $f\in \text{span}(\mathcal{S})$ and $\varphi \in \mathcal{S}$, the real-valued function $A_{f,\varphi}(h) = A(f + h\varphi)$ is second-order differentiable (as a function of $h$) and the second derivative satisfies
	\begin{equation*}
		A''_{f,\,\varphi}(0) \leq M(||f||_{1}),
	\end{equation*}
	where $M(\cdot)$ is a nondecreasing real-valued function.
\end{enumerate}

\end{assumption}

\begin{lemma} \label{lemma:gmm_crit}
	Let (i) the basis functions $\varphi$ be bounded as $\sup_{x}|\varphi(x)^{2}|=C<\infty$, (ii) the maximal eigenvalue $\lambda_{max}$ of the weighting matrix $\Omega$ be bounded from above, $\lambda_{max}(\Omega) < \infty$, and (iii) $\mathbb{E}[|z_{i}'z_{i}|] \leq B < \infty$. Then the population GMM criterion defined in \eqref{eq:gmm_crit} satisfies Assumption \ref{ass:ass_crit}.
\end{lemma}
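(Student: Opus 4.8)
The plan is to use the fact that the moment function $g(W_i,f)=(y_i-f(x_i))z_i$ is \emph{affine} in $f$, so $Q(f)=g_0(f)'\Omega g_0(f)$ is a quadratic form in $f$ and every requirement of Assumption~\ref{ass:ass_crit} reduces to elementary algebra plus the stated boundedness conditions. Write $b=\mathbb{E}[y_iz_i]$ and define the linear map $Tf=\mathbb{E}[f(x_i)z_i]$, so that $g_0(f)=b-Tf$. I would first record that $T$ is well defined and bounded on $\mathrm{span}(\mathcal{S})$: for $\varphi\in\mathcal{S}$, conditions (i) and (iii) give
\[
\|T\varphi\|\le\mathbb{E}\big[|\varphi(x_i)|\,\|z_i\|\big]\le\sqrt{C}\,\mathbb{E}\big[\|z_i\|\big]\le\sqrt{C}\,\big(\mathbb{E}[z_i'z_i]\big)^{1/2}\le\sqrt{CB}
\]
by Jensen's inequality, and linearity extends the bound to all of $\mathrm{span}(\mathcal{S})$; together with finiteness of $\mathbb{E}[y_iz_i]$ this makes $Q$ finite on $\mathrm{span}(\mathcal{S})$. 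Convexity of $Q$ is then immediate: $g_0$ is affine in $f$ and $v\mapsto v'\Omega v$ is convex because $\Omega$ is positive semi-definite, so $Q$ is a convex function composed with an affine map.

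For the two differentiability conditions I would simply expand. Using $g_0(f+h\varphi)=g_0(f)-hT\varphi$ and symmetry of $\Omega$,
\[
Q(f+h\varphi)=Q(f)-2h\,(T\varphi)'\Omega g_0(f)+h^2(T\varphi)'\Omega(T\varphi),
\]
so $\frac1h\big(Q(f+h\varphi)-Q(f)\big)\to-2(T\varphi)'\Omega g_0(f)$ as $h\to0$; since $\varphi\mapsto-2(T\varphi)'\Omega g_0(f)$ is linear in $\varphi$, this exhibits $\nabla Q$ and verifies condition~1. The same expansion shows that $A_{f,\varphi}(h)=Q(f+h\varphi)$ is a polynomial of degree at most two in $h$, hence twice differentiable, with $A''_{f,\varphi}(0)=2(T\varphi)'\Omega(T\varphi)$; then $A''_{f,\varphi}(0)\le2\lambda_{max}(\Omega)\|T\varphi\|^2\le2\lambda_{max}(\Omega)\,CB$ by (i)--(iii). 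Since this bound is a finite constant independent of $f$, taking $M(\cdot)\equiv2\lambda_{max}(\Omega)\,CB$ --- a (trivially nondecreasing) constant function --- verifies condition~2.

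I do not expect a genuine obstacle here; the only thing requiring care is the bookkeeping that keeps $Q$, $\nabla Q$, and $A''_{f,\varphi}$ finite, i.e.\ checking via Jensen's inequality and conditions (i) and (iii) that the expectations defining $b$ and $T$ exist and are bounded. The pleasant point, relative to the exponential-loss example treated in \cite{zhang_yu2005boostingES} where the second derivative genuinely grows with $\|f\|_1$, is that quadraticity makes $A''_{f,\varphi}(0)$ independent of $f$, so $M$ may be taken constant.
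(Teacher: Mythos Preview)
Your proposal is correct and follows essentially the same route as the paper: both exploit that $g_0$ is affine in $f$ so that $Q(f+h\varphi)$ is quadratic in $h$, then bound $A''_{f,\varphi}(0)=(T\varphi)'\Omega(T\varphi)$ by $\lambda_{max}(\Omega)\|T\varphi\|^2$ and control $\|T\varphi\|^2$ via conditions (i) and (iii). The only cosmetic differences are that you use Jensen's inequality where the paper invokes Cauchy--Schwarz (both yield $\|T\varphi\|^2\le CB$), and you are more careful in retaining the factor $2$ from the second derivative and in explicitly verifying the Frechet-like differentiability in condition~1, which the paper dispatches in a single sentence.
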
 

\begin{assumption} \label{ass:step_size} 
\textbf{Step size.}
\begin{itemize}
	\item[(a)] Let $\Lambda_{m} \subset \mathbb{R}$ such that $0 \in \Lambda_{m}$ and $\Lambda_{m} = -\Lambda_{m}$.
	\item[(b)] Let $h_{m} = \sup \Lambda_{m}$ satisfy the conditions
	\begin{equation}
		\sum_{j=0}^{\infty}h_{j} = \infty, \quad \sum_{j=0}^{\infty}h_{j}	^{2}<\infty.
	\end{equation}
\end{itemize}

Then we can bound the step size $|\bar{\alpha}_{m}| \leq h_{m}$.
\end{assumption}

Note that Assumption \ref{ass:step_size}(a) restricts the step size $\alpha_{m}$. \cite{friedman2001greedy} argues that restricting the step size is always preferable in practice, thus, we will restrict our attention to this case\footnote{\cite{zhang_yu2005boostingES} provide a short discussion on how to deal with the unrestricted step size, however, the argument relies on the exact minimization which greatly complicates the analysis.}. Moreover, $\Lambda_m$ is allowed to depend on the previous steps of the algorithm. Assumption \ref{ass:step_size}(b) requires the step size $h_j$ to be small ($\sum_{j=0}^{\infty}h_{j}^{2}<\infty$) preventing large oscillation, but not too small ($\sum_{j=0}^{\infty}h_{j} = \infty$) ensuring that $f_{m}$ can cover the whole $\text{span}(\mathcal{S})$. The following theorem establishes the main numerical convergence result.

\begin{theorem} \label{thm:num_convergence}
	Assume that we choose quantities $f_{0}$, $\epsilon_{m}$ and $\Lambda_{m}$ independent of the sample $W$. Given the results of Lemma \ref{lemma:gmm_crit}, as long as there exists $h_j$ satisfying Assumption \ref{ass:step_size} and $\epsilon_{j}$ such that $\sum_{j=0}^{\infty}\epsilon_{j} < \infty$, we have the following convergence result:
	\begin{equation*}
		\lim_{m\rightarrow\infty}\hat{Q}(f_{m}) = \inf_{f\in\text{span}(\mathcal{S})}\hat{Q}(f).
	\end{equation*}
\end{theorem}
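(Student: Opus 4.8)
The plan is to read Theorem~\ref{thm:num_convergence} as a direct instance of the numerical convergence result for restricted greedy algorithms in \cite{zhang_yu2005boostingES}, instantiated with the convex functional taken to be the empirical GMM criterion $\hat{Q}$ of \eqref{eq:gmm_crit}. The first step is to check that $\hat{Q}$ (not just the population $Q$) satisfies Assumption~\ref{ass:ass_crit}; this is Lemma~\ref{lemma:gmm_crit} applied to the empirical criterion, and the argument is identical because $\hat{g}(f) = n^{-1}\sum_i (y_i - f(x_i))z_i$ is affine in $f$ and $\hat{\Omega}$, $n^{-1}\sum_i \|z_i\|^2$ are finite. The key structural consequence I would highlight is that $h \mapsto \hat{Q}(f+h\varphi)$ is a quadratic polynomial, so its second derivative $\hat{A}''_{f,\varphi}(0) = 2\,(n^{-1}\sum_i \varphi(x_i)z_i)'\hat{\Omega}(n^{-1}\sum_i \varphi(x_i)z_i)$ does not depend on $f$ and, by boundedness of the basis functions together with $\lambda_{\max}(\hat{\Omega})<\infty$ and $n^{-1}\sum_i \|z_i\|^2 < \infty$, is bounded by a constant $\bar{M}$. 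Hence Assumption~\ref{ass:ass_crit} holds with the nondecreasing function $M(\cdot)$ taken to be the constant $\bar{M}$.

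With Assumption~\ref{ass:ass_crit} verified (constant curvature), Assumption~\ref{ass:step_size} on the step sizes, Assumption~\ref{ass_1} on the approximate minimization, the summability $\sum_j \epsilon_j < \infty$, and the sample-independence of $f_0, \epsilon_m, \Lambda_m$, I would run the standard three-part argument of \cite{zhang_yu2005boostingES}. (a) \emph{One-step inequality}: from the definition of the approximate-minimization step and a second-order Taylor expansion along the chosen direction $\bar{g}_m \in \mathcal{S}$ with $|\bar{\alpha}_m| \le h_m$, one gets $\hat{Q}(f_{m+1}) \le \inf_{\alpha \in \Lambda_m, g \in \mathcal{S}} \hat{Q}(f_m + \alpha g) + \epsilon_m \le \hat{Q}(f_m) - (\text{first-order gain}) + \tfrac12 \bar{M} h_m^2 + \epsilon_m$. (b) \emph{Convergence of the objective}: since $\hat{Q}\ge 0$ is bounded below and, by Assumption~\ref{ass:step_size}(b) and $\sum_j \epsilon_j < \infty$, the accumulated remainder $\sum_m (\tfrac12 \bar{M} h_m^2 + \epsilon_m) = \tfrac12\bar{M}\sum_m h_m^2 + \sum_m \epsilon_m$ is finite, the ``almost non-increasing'' sequence $\hat{Q}(f_m)$ converges to some limit $Q^{*}$. (c) \emph{Identification of the limit}: suppose toward a contradiction that $Q^{*} > \inf_{f\in\text{span}(\mathcal{S})} \hat{Q}(f) + \delta$ for some $\delta>0$, and pick $\bar{f}\in\text{span}(\mathcal{S})$ within $\delta/2$ of the infimum; writing $\bar{f}-f_m$ as a finite combination of elements of $\mathcal{S}$, convexity of $\hat{Q}$ forces the best single-direction directional derivative of $\hat{Q}$ at $f_m$ to be $\le -c\,\delta/\|\bar{f}-f_m\|_1$ for a constant $c>0$; substituting a suitably chosen step $\alpha\le h_m$ into the one-step bound and summing, the divergence $\sum_j h_j = \infty$ together with the $O(\sqrt{m})$ growth of $\|f_m\|_1$ (Cauchy--Schwarz, using $\sum_j h_j^2 < \infty$) yields a contradiction with the boundedness of $\hat{Q}$. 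Therefore $Q^{*} = \inf_{f\in\text{span}(\mathcal{S})} \hat{Q}(f)$.

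The main obstacle is step (c): balancing the first-order gain against the quadratic remainder while the $1$-norm $\|f_m\|_1$ may genuinely diverge (precisely because $\sum_j h_j = \infty$), which requires optimizing the step size $\alpha\le h_m$ in the one-step bound rather than always taking $\alpha=h_m$ — this is the delicate bookkeeping at the heart of \cite{zhang_yu2005boostingES}. In our setting this is eased, but not eliminated, by the fact that $\hat{Q}$ is a nonnegative quadratic form, so the curvature $\bar{M}$ is a true constant and $\sum_m \bar{M} h_m^2$ is summable for free. A secondary point to handle with care is that $\hat{\Omega}$ is only positive semidefinite, so the infimum of $\hat{Q}$ over $\text{span}(\mathcal{S})$ may be attained on an affine subspace rather than at a unique minimizer; this is harmless, since steps (a)--(c) use only convexity, nonnegativity, and the \emph{value} of the infimum. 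Finally, I would note that the sample-independence of $f_0, \epsilon_m, \Lambda_m$ is exactly what lets us, for each realization of $W$, treat the algorithm as a deterministic greedy procedure to which this purely analytic argument applies verbatim.
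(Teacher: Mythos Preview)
Your proposal is correct and follows essentially the same route as the paper: both reduce Theorem~\ref{thm:num_convergence} to the restricted-step-size convergence result of \cite{zhang_yu2005boostingES}, with the paper packaging the one-step and iterated bounds as Lemmas~A1--A2 (the constant-curvature specializations of their Lemmas~4.1--4.2) and then invoking their Corollary~4.1. Your observation that Assumption~\ref{ass:ass_crit} must be verified for the \emph{empirical} criterion $\hat{Q}$ rather than $Q$ is well taken and the fix you give is exactly right; one minor quibble is that in step~(c) the relevant fact is not the $O(\sqrt{m})$ growth of $\|f_m\|_1$ per se but the divergence $\sum_m h_m/(s_m+\|\bar f\|_1)=\infty$, which follows from $s_{m+1}-s_m=h_m$ and $s_m\to\infty$.
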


\subsection{Statistical convergence}

We need to show that the sample GMM criterion uniformly converges to its population analog, then under proper regularity conditions we will be able to ensure consistency of boostIV.

To show that the sample GMM criterion converges uniformly to its population analog, we will first bound the moment function and then we will show that it is sufficient to put a bound on the criterion function. 

\begin{assumption} \label{ass:uc}
	Assume the following conditions:
	\begin{enumerate}
		\item The class of weak learners $\mathcal{S}$ is closed under negation, i.e. $f \in \mathcal{S} \rightarrow -f \in \mathcal{S}$.
		\item The moment function is Lipschitz with each component $j = 1,\,\dots,\,k$ satisfying
		\begin{equation*}
			\exists \gamma_{j}(\beta) \text{ such that } \forall |f_{1}|,\,|f_{2}| \leq \beta \quad |g_{j}(f_{1}) - g_{j}(f_{2})| \leq \gamma_{j}(\beta)|f_{1} - f_{2}|,
		\end{equation*}
		implying that
		\begin{equation*}
			||g(f_{1}) - g(f_{2})|| \leq \gamma(\beta)|f_{1} - f_{2}|, \quad \gamma(\beta) = \sqrt{\sum_{j=1}^{k}\gamma_{j}^{2}(\beta)}.
		\end{equation*}
	\end{enumerate}
\end{assumption}

To bound the rate of uniform convergence of the moment function, we  appeal to the concept of Rademacher complexity. Let $\mathcal{H} = {h(w)}$ be a set of real-valued functions. Let $\{\zeta_{i}\}_{i=1}^{n}$ be a sequence of binary random variables such that $\zeta_{i}$ takes values in $\{-1,1\}$ with equal probabilities. Then the sample or empirical Rademacher complexity of class $\mathcal{H}$ is given by
\begin{equation} \label{eq:rad_def}
	\hat{R}(\mathcal{H}) = \mathbb{E}_{\zeta}\left[\sup_{h \in \mathcal{H}} n^{-1}\sum_{i=1}^{n}\zeta_{i}h(W_{i})\right].
\end{equation}
We also denote $R(\mathcal{H}) = \mathbb{E}_{W}\hat{R}(\mathcal{H})$ to be the expected Rademacher complexity, where $\mathbb{E}_{W}$ is the expectation with respect to the sample $W = (W_{1},\,\dots,\,W_{n})$. Note that the definition in \eqref{eq:rad_def} differs from the standard definition of Rademacher complexity where there is an absolute value under the supremum sign \parencite{vaart_wellner1996weak}. The current version of Rademacher complexity has the merit that it vanishes for function classes consisting of single constant function, and is always dominated by the standard Rademacher complexity. Both definitions agree for function classes which are closed under negation \parencite{meir_zhang2003generalization}.

\begin{lemma} \label{lemma:moment_uc}
	Under Assumption \ref{ass:uc}, for all $j = 1,\,\dots,\,k$,
	\begin{equation*}
		\mathbb{E}_{W} \sup_{||f||_{1} \leq \beta} |g_{0,j}(f) - \hat{g}_{j}(f)| \leq 2\gamma_{j}(\beta)\beta R(\mathcal{S}).
	\end{equation*}
\end{lemma}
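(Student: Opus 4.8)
The plan is to run the textbook symmetrization--contraction argument for uniform laws of large numbers, adapted to the asymmetric (no--absolute--value) Rademacher complexity used in \eqref{eq:rad_def}. Fix $j$. The first step is symmetrization: by the symmetrization inequality for empirical processes, $\mathbb{E}_{W}\sup_{\|f\|_{1}\le\beta}|g_{0,j}(f)-\hat g_{j}(f)|$ is bounded by $2\,\mathbb{E}_{W}\mathbb{E}_{\zeta}\sup_{\|f\|_{1}\le\beta}\frac1n\sum_{i=1}^{n}\zeta_{i}g_{j}(W_{i},f)$, where the factor $2$ is the usual symmetrization constant; the sign-symmetry of the $\zeta_{i}$ together with the closedness of $\mathcal{S}$ under negation (part 1 of Assumption \ref{ass:uc}, which makes the ball $\{\|f\|_{1}\le\beta\}$ symmetric about the origin) is what lets me keep the one-sided supremum of \eqref{eq:rad_def} rather than its absolute-value variant.

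The second step is a contraction (comparison) inequality. The moment function $g_{j}(W_{i},f)$ depends on $f$ only through the scalar $f(x_{i})$ (e.g. $g_{j}(W_{i},f)=(y_{i}-f(x_{i}))z_{ij}$), and on $\{\|f\|_{1}\le\beta\}$ one has $|f(x_{i})|\le\beta$ (the weak learners in $\mathcal{S}$ being normalized to sup-norm at most $1$), so by part 2 of Assumption \ref{ass:uc} the map $t\mapsto g_{j}(W_{i},f)\big|_{f(x_{i})=t}$ is $\gamma_{j}(\beta)$-Lipschitz on $[-\beta,\beta]$ with a deterministic constant. Talagrand's contraction lemma in its one-sided form (no absolute values, hence no spurious numerical factor, and no requirement that the Lipschitz maps vanish at $0$) then gives
\[
\mathbb{E}_{\zeta}\sup_{\|f\|_{1}\le\beta}\frac1n\sum_{i=1}^{n}\zeta_{i}g_{j}(W_{i},f)\;\le\;\gamma_{j}(\beta)\,\mathbb{E}_{\zeta}\sup_{\|f\|_{1}\le\beta}\frac1n\sum_{i=1}^{n}\zeta_{i}f(x_{i}).
\]

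The third step passes from the $\ell_{1}$-ball to $\mathcal{S}$ itself. The set $\{f:\|f\|_{1}\le\beta\}$ is exactly $\beta$ times the absolutely convex hull of $\mathcal{S}$, and a linear functional attains its supremum over a convex hull at an extreme point, so $\sup_{\|f\|_{1}\le\beta}\frac1n\sum_{i}\zeta_{i}f(x_{i})=\beta\sup_{f\in\mathcal{S}\cup(-\mathcal{S})}\frac1n\sum_{i}\zeta_{i}f(x_{i})$; since $\mathcal{S}$ is closed under negation this equals $\beta\sup_{f\in\mathcal{S}}\frac1n\sum_{i}\zeta_{i}f(x_{i})$. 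Taking $\mathbb{E}_{W}\mathbb{E}_{\zeta}$ turns the right-hand side into $\beta R(\mathcal{S})$, and chaining the three displays yields the bound $2\gamma_{j}(\beta)\beta R(\mathcal{S})$, as claimed.

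The routine parts are the symmetrization and the convex-hull reduction; the step I expect to demand the most care is the contraction in Step 2 --- in particular, justifying that $g_{j}(W_{i},\cdot)$ may legitimately be treated as a Lipschitz function of the single real argument $f(x_{i})$ with a constant that is uniform (non-random) over the sample, and invoking the precise one-sided version of Talagrand's lemma so that the constant stays at $1$ and no extra factor leaks into the final bound. A secondary technical point to verify is the normalization implicit in passing between the condition $\|f\|_{1}\le\beta$ appearing in the lemma and the domain $|f|\le\beta$ of the Lipschitz hypothesis in Assumption \ref{ass:uc}.
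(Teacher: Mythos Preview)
Your proposal is correct and is essentially the same argument the paper relies on: the paper's proof simply cites Lemma~4.3 of \cite{zhang_yu2005boostingES}, and that lemma is proved by exactly the symmetrization--contraction--convex-hull chain you outline. In effect you have unpacked the citation rather than offered an alternative route.
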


For many classes the Rademacher complexity can be calculated directly, however, to obtain a more general result we need to bound $R(\mathcal{S})$. Using the results from Section 4.3 in \cite{zhang_yu2005boostingES} we can bound the expected Rademacher complexity of the weak learner class by
\begin{equation} \label{eq:rad_bound}
	R(\mathcal{S}) \leq \frac{C_{\mathcal{S}}}{\sqrt{n}},
\end{equation}
where $C_{\mathcal{S}}$ is a constant that solely depends on $\mathcal{S}$. \cite{zhang_yu2005boostingES} also show that popular weak learners such as two-level neural networks and trees basis functions satisfy the requirements. However, \cite{zhang_yu2005boostingES} point out that in general the bound may be slower than root-n. In Appendix \ref{app:rad_bound} we derive an alternative bound on $R(\mathcal{S})$ that works for any class with finite VC dimension. The derived VC bound is slower by the factor of $\log(n)$ that appears in a lot of ML algorithms. 

Condition \eqref{eq:rad_bound} allows us to bound the moment function which leads to a bound on the rate of uniform convergence of the GMM criterion. The formal statements of the results are presented below.

\begin{lemma} \label{lemma:moment_consistency}
	Suppose that condition \eqref{eq:rad_bound} holds, then under Assumption \ref{ass:uc}, 
	\begin{equation*}
		\sup_{||f||_{1} \leq \beta} ||g_{0}(f) - \hat{g}(f)|| \overset{p}{\to} 0.
	\end{equation*}
\end{lemma}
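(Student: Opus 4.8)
The plan is to convert the expectation bound of Lemma~\ref{lemma:moment_uc} into convergence in probability, first coordinate by coordinate and then for the full vector norm.

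First, I would combine Lemma~\ref{lemma:moment_uc} with the Rademacher bound \eqref{eq:rad_bound}: for each $j=1,\dots,k$,
\[
	\mathbb{E}_{W}\sup_{||f||_{1}\le\beta}|g_{0,j}(f)-\hat{g}_{j}(f)| \;\le\; 2\gamma_{j}(\beta)\beta\, R(\mathcal{S}) \;\le\; \frac{2\gamma_{j}(\beta)\beta\, C_{\mathcal{S}}}{\sqrt{n}}.
\]
Since $\beta$ is fixed, $\gamma_{j}(\beta)<\infty$ under Assumption~\ref{ass:uc}, and $C_{\mathcal{S}}$ is a constant depending only on $\mathcal{S}$ (in particular, not on $n$), the right-hand side tends to $0$. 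The supremum on the left is a nonnegative random variable, so Markov's inequality gives, for any $\delta>0$,
\[
	\mathbb{P}\!\left(\sup_{||f||_{1}\le\beta}|g_{0,j}(f)-\hat{g}_{j}(f)|>\delta\right) \;\le\; \frac{2\gamma_{j}(\beta)\beta\, C_{\mathcal{S}}}{\delta\sqrt{n}} \;\longrightarrow\; 0,
\]
that is, $\sup_{||f||_{1}\le\beta}|g_{0,j}(f)-\hat{g}_{j}(f)|\overset{p}{\to}0$ for every coordinate $j$.

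Second, I would pass from the coordinates to the Euclidean norm of the vector. Because the supremum of a finite sum is dominated by the sum of the suprema,
\[
	\sup_{||f||_{1}\le\beta}||g_{0}(f)-\hat{g}(f)|| \;\le\; \sup_{||f||_{1}\le\beta}\sum_{j=1}^{k}|g_{0,j}(f)-\hat{g}_{j}(f)| \;\le\; \sum_{j=1}^{k}\sup_{||f||_{1}\le\beta}|g_{0,j}(f)-\hat{g}_{j}(f)|,
\]
where the first inequality uses $||v||\le\sum_{j}|v_{j}|$ on $\mathbb{R}^{k}$. Each of the $k$ terms on the right is $o_{p}(1)$ by the first step, and a finite sum of $o_{p}(1)$ terms is $o_{p}(1)$, which yields the claim; the same display in fact gives the sharper statement $\sup_{||f||_{1}\le\beta}||g_{0}(f)-\hat{g}(f)|| = O_{p}(n^{-1/2})$.

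I do not expect a genuine obstacle here: the argument is just Markov's inequality applied to the symmetrization bound already established in Lemma~\ref{lemma:moment_uc}. The only points requiring a word of care are that the constants $\gamma_{j}(\beta)$ and $C_{\mathcal{S}}$ are genuinely independent of $n$, so that the $n^{-1/2}$ rate drives every coordinate to zero, and the measurability of the suprema over the uncountable index set $\{f:||f||_{1}\le\beta\}$, which is standard given separability of $\mathrm{span}(\mathcal{S})$ together with the continuity of $f\mapsto g_{0,j}(f)-\hat{g}_{j}(f)$ guaranteed by the Lipschitz condition in Assumption~\ref{ass:uc}.
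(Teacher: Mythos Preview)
Your proposal is correct and follows essentially the same route as the paper: apply the expectation bound of Lemma~\ref{lemma:moment_uc} together with \eqref{eq:rad_bound}, use Markov's inequality to obtain coordinatewise $o_{p}(1)$, and then aggregate the $k$ coordinates to the vector norm. The only cosmetic difference is that the paper bounds the Euclidean norm via $\bigl(\sum_{j}|\cdot|^{2}\bigr)^{1/2}\le\sqrt{k}\max_{j}|\cdot|$ whereas you use $\|v\|\le\sum_{j}|v_{j}|$; both give the same $O_{p}(n^{-1/2})$ conclusion.
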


\begin{theorem} \label{thm:criterion_uc}
	Suppose that (i) the data $W = (W_{1},\,\dots,\,W_{n})$ are i.i.d., (ii) $\hat{\Omega} \overset{p}{\to} \Omega$, (iii) Assumption \ref{ass:uc} is satisfied, and (iv) $\mathbb{E}_{W}\left[\sup_{||f||_{1} \leq \beta}||g(W_{i}, f)||\right] < \infty$. Then
	\begin{equation*}
		\sup_{||f||_{1} \leq \beta}|\hat{Q}(f) - Q(f)| \overset{p}{\to} 0.
	\end{equation*} 
\end{theorem}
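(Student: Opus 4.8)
The plan is to reduce the claim to the uniform convergence of the sample moment vector, which is already established in Lemma \ref{lemma:moment_consistency}, and then propagate that convergence through the quadratic form defining the GMM criterion. First I would write the algebraic decomposition
\begin{equation*}
	\hat{Q}(f) - Q(f) = \hat{g}(f)'(\hat{\Omega} - \Omega)\hat{g}(f) + \big(\hat{g}(f) - g_{0}(f)\big)'\Omega\hat{g}(f) + g_{0}(f)'\Omega\big(\hat{g}(f) - g_{0}(f)\big),
\end{equation*}
which follows by adding and subtracting $\hat{g}(f)'\Omega\hat{g}(f)$ and using symmetry of $\Omega$. The strategy is then to bound each of the three terms uniformly over $\{f : \|f\|_{1} \leq \beta\}$ by a product of an $o_{p}(1)$ factor and an $O_{p}(1)$ factor.

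For the bookkeeping I would introduce the (random) quantity $\Delta_{n} = \sup_{\|f\|_{1}\leq\beta}\|\hat{g}(f) - g_{0}(f)\|$ and the deterministic quantity $G = \sup_{\|f\|_{1}\leq\beta}\|g_{0}(f)\|$. By Jensen's inequality for the norm together with condition (iv), $G \leq \mathbb{E}_{W}\big[\sup_{\|f\|_{1}\leq\beta}\|g(W_{i},f)\|\big] < \infty$; by Lemma \ref{lemma:moment_consistency}, $\Delta_{n}\overset{p}{\to}0$; and hence $\sup_{\|f\|_{1}\leq\beta}\|\hat{g}(f)\| \leq G + \Delta_{n} = O_{p}(1)$. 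Applying Cauchy--Schwarz and the operator (spectral) norm inequality $|a'Ma| \leq \|M\|_{\mathrm{op}}\|a\|^{2}$, and using $\|\Omega\|_{\mathrm{op}} = \lambda_{\max}(\Omega)$ for the PSD matrix $\Omega$, the decomposition yields, uniformly over $\|f\|_{1}\leq\beta$,
\begin{equation*}
	|\hat{Q}(f) - Q(f)| \leq \|\hat{\Omega} - \Omega\|_{\mathrm{op}}\,(G + \Delta_{n})^{2} + \lambda_{\max}(\Omega)\,\Delta_{n}\,(2G + \Delta_{n}).
\end{equation*}

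It remains to argue the right-hand side is $o_{p}(1)$. Condition (ii) gives $\hat{\Omega}\overset{p}{\to}\Omega$, and since all norms on the finite-dimensional space of $k\times k$ matrices are equivalent, $\|\hat{\Omega}-\Omega\|_{\mathrm{op}}\overset{p}{\to}0$; combined with $(G+\Delta_{n})^{2} = O_{p}(1)$, the first term vanishes in probability. The second term is the product of the fixed finite constant $\lambda_{\max}(\Omega)$ (finite because $\Omega$ is a fixed $k\times k$ matrix), the $o_{p}(1)$ term $\Delta_{n}$, and the $O_{p}(1)$ term $(2G+\Delta_{n})$, hence also $o_{p}(1)$. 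Taking the supremum over $\|f\|_{1}\leq\beta$ on the left and applying Slutsky-type arguments completes the proof. The only genuinely delicate point is that every bound must hold uniformly over the (infinite) class $\{f : \|f\|_{1}\leq\beta\}$; this is precisely what Lemma \ref{lemma:moment_consistency} provides, via the Rademacher-complexity bound \eqref{eq:rad_bound} and the Lipschitz moment condition in Assumption \ref{ass:uc}. Consequently the work here is confined to checking that these uniform controls survive multiplication and addition inside the quadratic form, rather than establishing any new empirical-process bound.
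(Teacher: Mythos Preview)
Your proof is correct and follows essentially the same route as the paper: an algebraic decomposition of the quadratic form $\hat{Q}(f)-Q(f)$, followed by Cauchy--Schwarz/operator-norm bounds, with the uniform moment convergence supplied by Lemma~\ref{lemma:moment_consistency}, the weight-matrix convergence by (ii), and the boundedness of $g_{0}$ by (iv). The only cosmetic difference is that the paper's decomposition anchors the cross terms at $g_{0}$ and $\hat{\Omega}$ (yielding the bound $\|\hat g - g_0\|^{2}\|\hat\Omega\| + 2\|g_0\|\,\|\hat g - g_0\|\,\|\hat\Omega\| + \|g_0\|^{2}\|\hat\Omega-\Omega\|$), whereas you anchor at $\hat g$ and $\Omega$; both are standard and lead to the same conclusion.
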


\subsection{Consistency}

In this section we put together the arguments for numerical and statistical convergence presented in the previous subsections to prove consistency of the boostIV algorithm. We start with a general decomposition illustrating the proof strategy and highlighting where exactly numerical and statistical convergence step in.

Suppose that we run the boostIV algorithm and stop at an early stopping point $\hat{m}$ that satisfies $\mathbb{P}(||\hat{f}_{\hat{m}}||_{1} \leq \beta_{n}) = 1$ for some sample-independent $\beta_{n} \geq 0$. Let $f^{*}$ be a unique minimizer of the population criterion, i.e. $Q(f^{*}) = \inf_{f \in \text{span}(\mathcal{S})}Q(f)$. By the triangle inequality, we get the following decomposition
\begin{align*}
	\left| Q(\hat{f}_{\hat{m}}) - Q(f^{*}) \right| & \leq \left| Q(\hat{f}_{\hat{m}}) - \hat{Q}(\hat{f}_{\hat{m}}) \right| + \left| \hat{Q}(\hat{f}_{\hat{m}}) - \hat{Q}(f^{*}) \right| + \left| \hat{Q}(f^{*}) - Q(f^{*}) \right| \\
	& \leq 2\sup_{||f||_{1}\leq\beta}\left|\hat{Q}(f) - Q(f)\right| + \left| \hat{Q}(\hat{f}_{\hat{m}}) - \hat{Q}(f^{*}) \right| 
\end{align*}
We can bound the first term using the uniform bound on the sample GMM criterion in Theorem \ref{thm:criterion_uc}, this is the statistical convergence argument. In order to bound the second term, we have to appeal to the numerical convergence argument in Theorem \ref{thm:num_convergence}. As a result, since $Q(\hat{f}_{\hat{m}}) \rightarrow Q(f^{*})$ as $n\rightarrow\infty$, it follows that $\hat{f}_{\hat{m}} \overset{p}{\to} f^{*}$. The following theorem formalizes the result.

\begin{theorem} \label{thm:consistency}
	Suppose that the assumptions of Theorems \ref{thm:num_convergence} and \ref{thm:criterion_uc} hold. Consider two sequences $k_{n}$ and $\beta_{n}$ such that $\lim_{n\rightarrow\infty}m_{n}=\infty$ and $\lim_{n\rightarrow\infty}\gamma(\beta_{n})\beta_{n}R(\mathcal{S}) = 0$. Then as long as we stop the algorithm at step $\hat{m}$ based on $W$ such that $\hat{m} \geq m_{n}$ and $||\hat{f}_{\hat{m}}||_{1} \leq \beta_{n}$, we have the consistency result $\hat{f}_{\hat{m}} \overset{p}{\to} f^{*}$.
\end{theorem}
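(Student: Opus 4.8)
The plan is to upgrade the triangle-inequality decomposition displayed just above the statement into a genuine convergence statement by controlling its two pieces separately and then passing from convergence of the criterion value to convergence of the function. I would work on the event $\{\|\hat f_{\hat m}\|_{1}\le\beta_{n}\}$, which has probability one by hypothesis, and note that for $n$ large also $\|f^{*}\|_{1}\le\beta_{n}$ (the quantity $\|f^{*}\|_{1}$ is fixed and $\beta_{n}$ may be taken nondecreasing to eventually exceed it). On this event the decomposition reads
\[
\bigl|Q(\hat f_{\hat m})-Q(f^{*})\bigr|\;\le\;2\sup_{\|f\|_{1}\le\beta_{n}}\bigl|\hat Q(f)-Q(f)\bigr|\;+\;\bigl|\hat Q(\hat f_{\hat m})-\hat Q(f^{*})\bigr|,
\]
and since $f^{*}$ minimizes $Q$ over $\text{span}(\mathcal{S})$ while $\hat f_{\hat m}\in\text{span}(\mathcal{S})$, the left-hand side equals $Q(\hat f_{\hat m})-Q(f^{*})\ge0$, so it suffices to bound the right-hand side from above by $o_{P}(1)$.

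For the statistical term I would re-run the proof of Theorem~\ref{thm:criterion_uc} with the fixed radius $\beta$ replaced by the diverging radius $\beta_{n}$: Lemma~\ref{lemma:moment_uc} gives $\mathbb{E}_{W}\sup_{\|f\|_{1}\le\beta_{n}}|g_{0,j}(f)-\hat g_{j}(f)|\le 2\gamma_{j}(\beta_{n})\beta_{n}R(\mathcal{S})$, and summing over $j=1,\dots,k$, invoking the rate hypothesis $\gamma(\beta_{n})\beta_{n}R(\mathcal{S})\to0$ and Markov's inequality yields $\sup_{\|f\|_{1}\le\beta_{n}}\|g_{0}(f)-\hat g(f)\|\overset{p}{\to}0$. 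Feeding this, the envelope condition (iv), and $\hat\Omega\overset{p}{\to}\Omega$ through the same algebra as in Theorem~\ref{thm:criterion_uc} (the criterion is a continuous quadratic form in $g$ and $\Omega$, uniformly over a ball with an integrable envelope) gives $\sup_{\|f\|_{1}\le\beta_{n}}|\hat Q(f)-Q(f)|\overset{p}{\to}0$, so the first term is $o_{P}(1)$.

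For the numerical term I would exploit that the approximate-minimization step (Assumption~\ref{ass_1}), taking $\alpha_{m}=0$ as a feasible point, forces $\hat Q(f_{m+1})\le\hat Q(f_{m})+\epsilon_{m}$ with $\sum_{j}\epsilon_{j}<\infty$, so $\{\hat Q(f_{m})\}_{m}$ is almost nonincreasing and, by Theorem~\ref{thm:num_convergence}, converges to $\inf_{f\in\text{span}(\mathcal{S})}\hat Q(f)$ along the whole iteration path. Because the data-dependent stopping index satisfies $\hat m\ge m_{n}\to\infty$, this gives $\hat Q(\hat f_{\hat m})\le\inf_{f\in\text{span}(\mathcal{S})}\hat Q(f)+o_{P}(1)$, whereas $f^{*}\in\text{span}(\mathcal{S})$ gives $\hat Q(f^{*})\ge\inf_{f\in\text{span}(\mathcal{S})}\hat Q(f)$; subtracting, $\hat Q(\hat f_{\hat m})-\hat Q(f^{*})\le o_{P}(1)$. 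Combining with the previous paragraph, $Q(\hat f_{\hat m})-Q(f^{*})=o_{P}(1)$, i.e. $Q(\hat f_{\hat m})\overset{p}{\to}Q(f^{*})$. Finally, since $Q$ is convex and $f^{*}$ is its unique minimizer over $\text{span}(\mathcal{S})$, the minimum is well separated ($\inf\{Q(f):\|f-f^{*}\|\ge\delta,\ \|f\|_{1}\le\beta_{n}\}>Q(f^{*})$ for every $\delta>0$), and the standard argmin-consistency argument from the extremum-estimator literature upgrades $Q(\hat f_{\hat m})\overset{p}{\to}Q(f^{*})$ to $\hat f_{\hat m}\overset{p}{\to}f^{*}$.

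The step I expect to be the main obstacle is coordinating the diverging radius $\beta_{n}$ with the randomness of the stopping time $\hat m$. The uniform-convergence bounds of Section~\ref{sec:consistency} are phrased over fixed balls, so one must check that enlarging the ball to radius $\beta_{n}$ keeps the envelope in condition (iv) of Theorem~\ref{thm:criterion_uc} under control --- which is precisely what the rate condition $\gamma(\beta_{n})\beta_{n}R(\mathcal{S})\to0$, together with $R(\mathcal{S})=O(n^{-1/2})$ (or the slower VC-type bound), is designed to buy --- while at the same time the pointwise-in-$W$ numerical convergence of $\hat Q(f_{m})$ must be shown to ``kick in'' early enough, uniformly across samples, that stopping at any $\hat m\ge m_{n}$ already suffices; making the latter fully rigorous may require a mild uniform-rate strengthening of Theorem~\ref{thm:num_convergence}. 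The concluding passage from criterion values to the function itself is the other delicate point, and it rests squarely on the uniqueness/well-separatedness of $f^{*}$ that is built into the statement.
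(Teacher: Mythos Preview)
Your proposal is correct and follows essentially the same route as the paper: the same triangle-inequality decomposition into a statistical term (handled via Theorem~\ref{thm:criterion_uc}) and a numerical term (handled via Theorem~\ref{thm:num_convergence} and the underlying Zhang--Yu argument), followed by the passage from $Q(\hat f_{\hat m})\to Q(f^{*})$ to $\hat f_{\hat m}\overset{p}{\to} f^{*}$. The paper's own proof is extremely terse---it simply cites Theorem~\ref{thm:criterion_uc} for the first term and defers to Theorem~3.1 of Zhang and Yu (2005) for the second---so your write-up is in fact more explicit than the paper's on every point, including the need to let the radius grow with $n$ (which is what the condition $\gamma(\beta_{n})\beta_{n}R(\mathcal{S})\to0$ is for) and the final argmin-consistency step.
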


\section{Monte Carlo experiments} \label{sec:mc}

\subsection{Univariate design}

To begin with, we consider a simple low-dimensional scenario with one endogenous variable and two instruments. 
\begin{equation*}
	y = g(x) + \rho e + \delta, \quad x = z_{1} + z_{2} + e + \gamma,
\end{equation*}
where instruments $z_{j} \sim U[-3,\,3]$ for $j = 1,\,2$, $e \sim \mathcal{N}(0,\,1)$ is the confounder, $\delta,\,\gamma \sim \mathcal{N}(0,\,0.1)$ are additional noise components, and $\rho$ is the parameter measuring the degree of endogeneity, which we set to 0.5 in the simulations. We focus on four specifications of the structural function:
\begin{itemize}
	\item \textbf{abs:} $g(x) = |x|$
	\item \textbf{log:} $g(x) = \log(|16x - 8| + 1)\text{sign}(x - 0.5)$
	\item \textbf{sin:} $g(x) = \sin(x)$
	\item \textbf{step:} $g(x) = \mathds{1}\{x < 0\} + 2.5\times\mathds{1}\{x \geq 0\}$
\end{itemize}

We compare the performance of boostIV and post-boostIV with the standard NPIV estimator using the cubic polynomial basis, Kernel IV (KIV) regression of \cite{singh2019kiv}\footnote{Code: \url{https://github.com/r4hu1-5in9h/KIV}}, DeepIV estimator of \cite{hartford2017}\footnote{We use the latest implementation of the econML package: \url{https://github.com/microsoft/EconML}} and DeepGMM estimator of \cite{bennett2019deepGMM}\footnote{Code: \url{https://github.com/CausalML/DeepGMM}}. We use $1,000$ observations for both train and test sets and $500$ observations for the validation set. Our results are based on $200$ simulations for each scenario.

\begin{sidewaysfigure}
\hspace*{-1cm}
	\centering
	\includegraphics[scale=0.8]{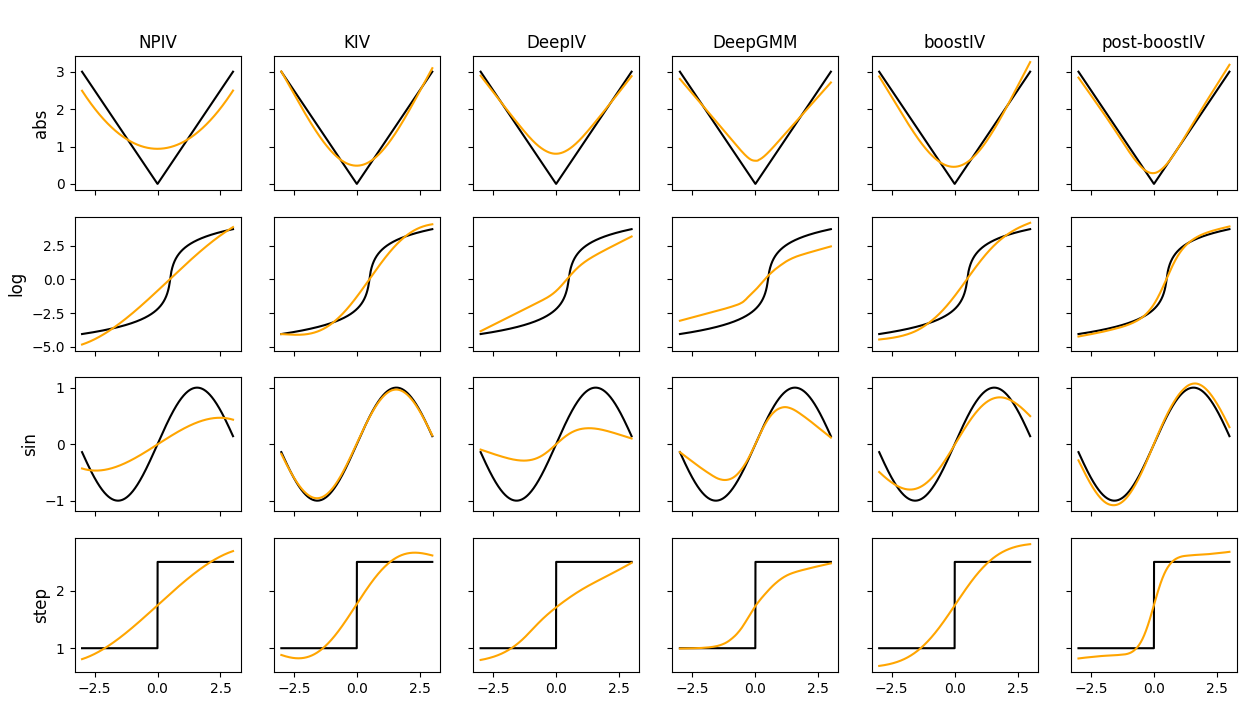}
	\caption{\textnormal{Out-of-sample average fit across simulations. The black line is the true function, the orange line is the fit.}}
	\label{fig:univar_fit}
\end{sidewaysfigure}

\begin{table}[!h] 
\centering
\caption{Univariate design: Out-of-sample MSE.}
\begin{tabular}{l|cccccc}
\toprule
 & NPIV & KIV & DeepIV & DeepGMM & boostIV & post-boostIV \\
\midrule
    abs & 0.1916 &  0.0564 &     0.1347 &      1.2717 &      0.0348 &           0.0217 \\
     log & 0.6936 &  0.3367 &     1.2708 &     14.4615 &      0.3173 &            0.0930 \\
     sin & 0.1837 &  0.0217 &     0.2798 &      0.8595 &      0.0292 &           0.0124 \\
     step & 0.1267 &  0.0972 &     0.1756 &      0.9796 &      0.1027 &           0.0546 \\
\bottomrule
\end{tabular}
\label{tab:univar}
\end{table}

We plot our results in Figures \ref{fig:univar_fit} which shows the average out of sample fit across simulations (orange line) compared to the true target function (black line). Table \ref{tab:univar} presents the out-of-sample MSE across simulations. First thing to notice is that NPIV fails to capture different functional form subtleties. Second, DeepIV's performance does not improve upon the one of NPIV. Moreover, even though DeepGMM estimates have lower bias than the ones of NPIV and DeepIV (except for the log function), they are quite volatile across simulations leading to higher MSE. BoostIV performs on par with KIV both in terms of the bias term as they are able to recover the underlying structural relation, and in terms of the variance leading to low MSE. Finally, the post-processing step helps to further improve upon boostIV's performance by reducing bias. On top of that, post-boostIV requires less iterations to converge. We use $5,000$ iterations for boostIV, while post-boostIV uses on average $50$ iterations.\footnote{In this experiment we do not tune boostIV, we just pick a large enough number of iterations for it to converge. However, we do tune post-boostIV.}

\subsection{Multivariate Design}

Consider the following data generating process:
\begin{align*}
	y_{i} & = h(x_{i}) + \varepsilon_{i} \\
	x_{i,k} & = g_{k}(z_{i}) + v_{i,k}, \quad k = 1,\,\dots,\,d_{x},
\end{align*}
where $y_{i} \in \mathbb{R}$ is the response variable, $x_{i} \in \mathbb{R}^{d_{x}}$ is the vector of potentially endogenous variables, $z_{i} \in \mathbb{R}^{d_{z}}$ is the vector of instruments, $\varepsilon_{i} \in \mathbb{R}$ is the structural error term, and $v_{i} \in \mathbb{R}^{d_{x}}$ is the vector of the reduced form errors. Function $h(\cdot)$ is the structural function of interest, and function $g(\cdot)$ governs the reduced form relationship between the endogenous regressors and instrumental variables.

Instruments are drawn from a multivariate normal distribution, $z_{i} \sim \mathcal{N}(0, \Sigma_{z})$, where $\Sigma_{z}$ is just an identity matrix. The error terms are described by the following relationship:
\begin{equation*}
	\varepsilon \sim \mathcal{N}(0,\,1), \quad v \sim \mathcal{N}(\rho \varepsilon,\,\mathcal{I} - \rho^{2}),
\end{equation*}
where $\rho$ is the correlation between $\varepsilon$ and the elements of $v$, which controls the degree of endogeneity.

We consider two structural function specifications: 
\begin{enumerate}
	\item a simpler design where the structural function is proportional to a multivariate normal density, i.e. $h(x) = \exp\{-0.5 x'x\}$. We will further refer to this specification as Design 1;
	\item a more challenging design where the structural function is $h(x) = \sum_{k=1}^{d_x}\sin(10x_{k})$. We will further refer to this specification as Design 2.
\end{enumerate} 
We also consider two different choices of the reduced form function $g(\cdot)$:
\begin{itemize}
	\item [(a)] linear: $g(Z_{i}) = Z_{i}'\Pi$, where $\Pi \in \mathbb{R}^{d_x \times d_z}$ is a matrix of reduced form parameters;
	\item [(b)] non-linear: $g_{k}(Z_{i}) = G(Z_{i}; \theta_{k})$ for $k = 1,\dots,d_{x}$, where $G(Z_{i}; \theta_{k})$ is a multivariate normal density parameterized by the mean vector $\theta_{k}$ (for simplicity, we use the identity covariance matrix).
\end{itemize}

We use $1,000$ observations for the train set and $500$ observations for both the validation and test sets. We run $200$ simulations for each scenario. The results are summarized in Tables \ref{tab:m_design_1} and \ref{tab:m_design_2}.

\begin{table}[!h] 
\centering
\caption{Design 1: Out-of-sample MSE.}
\begin{tabular}{llll|cccccc}
\toprule
dx &  dz &  IV type & $\rho$ & NPIV & KIV & DeepIV & DeepGMM & boostIV & post-boostIV \\
\midrule
    5 &   7 &     lin &  0.25 &   4.9535 &  0.0147 &     0.0497 &      0.2234 &      0.0213 &           1.3306 \\
     &    &      &  0.75 &   6.8889 &  0.0286 &      0.054 &      0.1655 &      0.0603 &           0.7249 \\
     &    &  nonlin &  0.25 &   4.0548 &   0.017 &     0.0757 &      0.3262 &      0.0875 &           0.3457 \\
     &    &   &  0.75 &   1.9932 &  0.0516 &     0.1188 &      0.7265 &      0.4287 &           0.9128 \\
\midrule
    10 &  12 &     lin &  0.25 &  23.1025 &  0.0024 &     0.0867 &      0.3089 &      0.0084 &           1.0953 \\
     &   &      &  0.75 &  39.6902 &  0.0108 &     0.0884 &       0.251 &      0.0427 &           0.6347 \\
     &   &  nonlin &  0.25 &   6.4842 &  0.0038 &       0.05 &      0.4908 &      0.0525 &           0.8137 \\
     &   &   &  0.75 &     2.53 &  0.0147 &     0.0691 &       0.822 &      0.3332 &           0.8937 \\
\bottomrule
\end{tabular}
\label{tab:m_design_1}
\end{table}

\begin{table}[!h] 
\centering
\caption{Design 2: Out-of-sample MSE.}
\begin{tabular}{llll|cccccc}
\toprule
dx &  dz &  IV type & $\rho$ & NPIV & KIV & DeepIV & DeepGMM & boostIV & post-boostIV \\
\midrule
	5 &   7 &     lin &  0.25 &  21.5854 &  2.4983 &     2.5484 &      2.9105 &      2.5105 &           3.5498 \\
    &   &      &  0.75 &  23.2413 &  2.5043 &     2.5358 &       2.792 &      2.5351 &           3.5081 \\
    &   &  nonlin &  0.25 &  19.0871 &  2.5118 &     2.5415 &      2.9867 &      2.5707 &           3.0303 \\
    &   &   &  0.75 &  22.1192 &  2.5367 &     2.5523 &      3.4188 &       2.891 &           3.7043 \\
\midrule
    10 &  12 &     lin &  0.25 &  147.984 &  5.0047 &     5.1383 &      5.9647 &      5.0209 &           5.9435 \\
    &   &      &  0.75 &   241.56 &  5.0326 &     5.1698 &      5.7147 &      5.0781 &           6.3259 \\
    &   &  nonlin &  0.25 &  61.0328 &  5.0103 &     5.0636 &      6.2145 &      5.0713 &           5.9001 \\
    &   &   &  0.75 &  112.785 &  4.9799 &     5.0631 &       6.193 &      5.3172 &           6.4674 \\
\bottomrule
\end{tabular}
\label{tab:m_design_2}
\end{table}

\subsection{Application to nonparametric demand estimation}

In this section we apply our algorithms to a more economically driven example of demand estimation. Demand estimation is a cornerstone of modern industrial organization and marketing research. Besides its practical importance, it poses a challenging estimation problem  which modern econometric and statistical tools can be applied to.

We consider a nonparametric demand estimation framework of \cite{gandhi2020flexible} (hereafter, GNT). GNT is a flexible framework that combines the nonparametric identification arguments of \cite{berry2014identification} with the dimensionality reduction techniques of \cite{gandhi_houde2019}.

In market $t$, $t=1,\dots,T$, there is a continuum of consumers choosing from a set of products $\mathcal{J} = \{0,1,\dots,J\}$ which includes the outside option, e.g. not buying any good. The choice set in market $t$ is characterized by a set of product characteristics $\chi_{t}$ partitioned as follows:
\begin{equation*}
	\chi_{t} \equiv (x_{t}, p_{t}, \xi_{t}),
\end{equation*}
where $x_{t} \equiv (x_{1t}, \dots, x_{Jt})$ is a vector of exogenous observable characteristics (e.g. exogenous product characteristics or market-level income), $p_{t} \equiv (p_{1t}, \dots, p_{Jt})$ are observable endogenous characteristics (typically, market prices) and $\xi_{t} \equiv (\xi_{1t}, \dots, \xi_{Jt})$ represent unobservables potentially correlated with $p_{t}$ (e.g. unobserved product quality). Let $\mathcal{X}$ denote the support of $\chi_{t}$. Then the structural demand system is given by
\begin{equation*}
	\sigma: \mathcal{X} \mapsto \Delta^{J},
\end{equation*}
where $\Delta^{J}$ is a unit $J$-simplex. The function $\sigma$ gives, for every market $t$, the vector $s_{t}$ of shares for the $J$ goods.

Following \cite{berry2014identification}, we partition the exogenous characteristics as $x_{t} = \left(x_{t}^{(1)}, x_{t}^{(2)} \right)$, where $x_{t}^{(1)} \equiv \left(x_{1t}^{(1)}, \dots, x_{Jt}^{(1)}\right)$, $x_{jt} \in \mathbb{R}$ for $j \in \mathcal{J} \backslash \{0\}$, and define the linear indices
\begin{equation*}
	\delta_{jt} = x_{jt}^{(1)}\beta_{j} + \xi_{jt}, \quad j \in \mathcal{J} \backslash \{0\},
\end{equation*}
and let $\delta_{t} \equiv (\delta_{1t}, \dots, \delta_{Jt})$. Without loss of generality, we can normalize $\beta_{j} = 1$ for all $j$ (see \cite{berry2014identification} for more details). Given the definition of the demand system, for every market $t$
\begin{equation*}
	\sigma(\chi_{t}) = \sigma\left(\delta_{t}, p_{t}, x_{t}^{(2)}\right).
\end{equation*}

Following \cite{berry2013connected} and \cite{berry2014identification}, we can show that there exists at most one vector $\delta_{t}$ such that $s_{t} = \sigma\left(\delta_{t}, p_{t}, x_{t}^{(2)}\right)$, meaning that we can write
\begin{equation} \label{eq:inv_demand}
	\delta_{jt} = \sigma_{j}^{-1}\left(s_{t}, p_{t}, x_{t}^{(2)} \right), \quad j \in \mathcal{J} \backslash \{0\}.
\end{equation}
We can rewrite \eqref{eq:inv_demand} in a more convenient form to get the following estimation equation 
\begin{equation} \label{eq:est_eq1}
	x_{jt}^{(1)} = \sigma_{j}^{-1}\left(s_{t}, p_{t}, x_{t}^{(2)} \right) - \xi_{jt}.
\end{equation}

Note that in \eqref{eq:est_eq1} the inverse demand is indexed by $j$, meaning that we have to estimate $J$ inverse demand functions. To circumvent this problem, \cite{gandhi_houde2019} suggest transforming the input vector space under the linear utility specification to get rid of the $j$ subscript. GNT follow this idea and show that Equation \eqref{eq:est_eq1} can be rewritten as
\begin{equation} \label{eq:est_eq2}
	\log\left(\frac{s_{jt}}{s_{0t}}\right) = x_{jt}^{(1)} + g(\omega_{jt}) + \xi_{jt},
\end{equation}
where $g$ is such that 
\begin{equation*}
	\sigma_{j}^{-1}\left(s_t, p_t, x_{t}^{(2)}\right) = \left(\frac{s_{jt}}{s_{0t}}\right) - g(\omega_{jt}),
\end{equation*}
and $\omega_{jt} \equiv \left(s_{jt}, \{s_{kt}, d_{jkt}\}_{j \neq k}\right)$, where $d_{jkt} = \tilde{x}_{jt} - \tilde{x}_{kt}$ and $\tilde{x}_{t} \equiv \left(p_t, x_{t}^{(2)}\right)$. 

Let $y_{jt} \equiv \log(s_{jt} / s_{0t}) - x_{jt}^{(1)}$, then we can rewrite equation \eqref{eq:est_eq2} in a more convenient form
\begin{equation} \label{eq:demand_main_eq}
	y_{jt} = g(\omega_{jt}) + \xi_{jt}.
\end{equation}
Thus, \eqref{eq:demand_main_eq} is our structural equation where $\omega_{jt}$ contains endogenous variables. Assume we have a cost-shifter $w_{jt}$ that is exogenous, then given $\mathbb{E}[\xi_{jt}|x_{t}, w_{t}] = 0$ for $j \in \mathcal{J}\backslash\{0\}$\footnote{We also need the completeness condition to be satisfied, see \cite{berry2014identification} for more details.} we can estimate the inverse demand function $g$. To constructs instruments, GNT transform the input space $(x_{t},\,w_{t})$ similarly to $\omega_{jt}$. Let $\zeta_{jt} \equiv  \{\Delta_{jkt}\}_{j\not=k}$, where $\Delta_{jkt} = z_{jt} - z_{kt}$ and $z_{t} = (x_{t},\,w_{t})$. Thus, we can perform estimation based on $\mathbb{E}[\xi_{jt}|\zeta_{jt}] = 0$.

\begin{table}[!h]
\centering
\caption{Inverse demand fit.}
\begin{tabular}{lll|cccc}
\toprule
 &  &  & KIV & DeepGMM & boostIV & post-boostIV \\
\hline
T & J & K & \multicolumn{4}{c}{Bias} \\
\midrule
100 & 10 & 10 &    5.3307 &        0.9852 &        0.1917 &             0.9847 \\	
	& 	 & 20 &    6.5053 &        1.6591 &       -0.3053 &             0.9100 \\
	& 20 & 10 &    5.4312 &        1.6979 &        0.1016 &             0.8351 \\
	&	 & 20 &  6.6666 &        3.2286 &       -0.4047 &             0.9389 \\
\midrule
T & J & K & \multicolumn{4}{c}{MSE} \\
\hline
100 & 10 & 10 &  45.5991 &      36.4565 &      15.1358 &            7.5426 \\
	& 	 & 20 &  73.2391 &      62.9117 &      26.1670 &           15.5126 \\
	& 20 & 10 &  47.4954 &      40.8320 &      15.8077 &            6.5199 \\
	&	 & 20 &  76.8815 &      70.0830 &      27.4539 &           14.2017 \\
\bottomrule
\end{tabular}
\label{tab:nonparam_demand}
\end{table}

In our model design there are $T=100$ markets with $J \in \{10,\,20\}$ products and $K \in \{10,\, 20\}$ nonlinear characteristics besides the price. We compare the performance of boostIV and post-boostIV with KIV and DeepGMM. We drop NPIV since in our design it fails due to the curse of dimensionality. We also drop DeepIV as it suffers from the exploding gradient problem. 

Table \ref{tab:nonparam_demand} summarizes the results. The first thing to notice is that KIV performs the worst, while in the previous experiments it was one of the best performing estimators. It has both high bias and variance. DeepGMM has smaller bias, but the variance is still big. Our algorithms clearly dominate KIV and GMM, with post-boostIV delivering the best MSE results, while having slightly higher bias compared to boostIV.

\section{Conclusion} \label{sec:conclusion}

In this paper we have introduced a new boosting algorithm called boostIV that allows to learn the target function in the presence of endogenous regressors. The algorithm is very intuitive as it resembles an iterative version of the standard 2SLS regression. We also study several extensions including the use of optimal instruments and the post-processing step.

We show that boostIV is consistent and demonstrates an outstanding finite sample performance in the series of Monte Carlo experiments. It performs especially well in the nonparameteric demand estimation example which is characterized by a complex nonlinear relationship between the target function and explanatory features.

Despite all the advantages of boostIV, the algorithm does not allow for high-dimensional settings where the number of regressors and/or instruments exceeds the number of iterations. We also believe it is possible to extend our algorithm in the spirit similar to XGBoost \citep{chen2016xgboost} that could decrease the computation time taken by the algorithm. These would be interesting directions for future research.

\newpage

\begin{appendices}

\section{Auxiliary Lemmas}

\setcounter{lemma}{0}
\renewcommand{\thelemma}{\Alph{section}\arabic{lemma}}
	
\begin{lemma} \label{lemma:delta_bound}
    Assume the assumptions of Lemma \ref{lemma:gmm_crit} are satisfied. Consider $h_m$ that satisfies Assumption \ref{ass:step_size}. Let $\bar{f}$ be an arbitrary reference function in $\mathcal{S}$. Also, define $s_m = ||f_0||_1 + \sum_{i=0}^{m-1}h_i$, and
    \begin{equation}
        \Delta \hat{Q}(f) = \max\left(0, \hat{Q}(f) - \hat{Q}(\bar{f})\right),
    \end{equation}
    \begin{equation}
        \bar{\epsilon}_{m} = \frac{h^{2}_{m}}{2}M + \epsilon_{m}.
    \end{equation}
    Then after $m$ steps the following bound holds for $f_{m+1}$:
    \begin{equation}
        \Delta \hat{Q}(f_{k+1}) \le \left(1 - \frac{h_m}{s_m + ||\bar{f}||_1}\right)\Delta  \hat{Q}(f_{m}) + \bar{\epsilon}_{m}
    \end{equation}
\end{lemma}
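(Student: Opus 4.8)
The plan is to run the one-step descent argument of \cite{zhang_yu2005boostingES}, specialized to the GMM criterion. The three structural facts I will use are all supplied by Lemma \ref{lemma:gmm_crit}: $\hat{Q}$ is convex on $\text{span}(\mathcal{S})$, it is Fréchet-differentiable in the sense of Assumption \ref{ass:ass_crit}(1) with directional derivative $\nabla\hat{Q}(f)'\varphi$ linear in $\varphi$, and the map $h\mapsto\hat{Q}(f+h\varphi)$ has second derivative bounded by the constant $M$. In fact, since $\hat{g}(f)$ is affine in $f$ and $\hat{\Omega}$ is positive semidefinite, that map is an honest quadratic in $h$, so the Taylor bound
\[
	\hat{Q}(f_m+\alpha g) \le \hat{Q}(f_m) + \alpha\,\nabla\hat{Q}(f_m)'g + \frac{h_m^2}{2}M
\]
holds for every $g\in\mathcal{S}$ and every $|\alpha|\le h_m$, with no restriction to the base point. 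This uniformity is the only place where the quadratic structure (rather than a generic loss) really matters, and it is what makes $M$ genuinely constant under the hypotheses of Lemma \ref{lemma:gmm_crit}.

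Next I would feed this into Assumption \ref{ass_1}. Since $\hat{Q}(f_{m+1}) \le \inf_{\alpha\in\Lambda_m,\,g\in\mathcal{S}}\hat{Q}(f_m+\alpha g)+\epsilon_m$, bounding the infimum by the Taylor display and optimizing over $g$ and over the sign of $\alpha$ (taking $|\alpha|=h_m=\sup\Lambda_m$; non-attainment of this supremum, or of the infimum over $\mathcal{S}$, is handled by a routine limiting argument absorbed into $\epsilon_m$), and using $\inf_{g\in\mathcal{S}}\nabla\hat{Q}(f_m)'g\le 0$ because $\mathcal{S}=-\mathcal{S}$, I obtain
\[
	\hat{Q}(f_{m+1}) \le \hat{Q}(f_m) + h_m\inf_{g\in\mathcal{S}}\nabla\hat{Q}(f_m)'g + \bar{\epsilon}_m .
\]

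The crux is to control $\inf_{g\in\mathcal{S}}\nabla\hat{Q}(f_m)'g$ by the gap $\hat{Q}(f_m)-\hat{Q}(\bar{f})$. I write $\bar{f}-f_m=\sum_l v_l g_l$ with $g_l\in\mathcal{S}$ and, using $\mathcal{S}=-\mathcal{S}$, $v_l\ge 0$ with $\sum_l v_l$ arbitrarily close to $||\bar{f}-f_m||_1\le ||\bar{f}||_1+||f_m||_1\le ||\bar{f}||_1+s_m=:V$, where $||f_m||_1\le s_m$ follows from $f_m=f_0+\sum_{i<m}\bar{\alpha}_i\bar{g}_i$ with $|\bar{\alpha}_i|\le h_i$ and $\bar g_i\in\mathcal{S}$. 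Linearity of the directional derivative together with the subgradient inequality for the convex $\hat{Q}$ then gives
\[
	V\inf_{g\in\mathcal{S}}\nabla\hat{Q}(f_m)'g \le \sum_l v_l\,\nabla\hat{Q}(f_m)'g_l = \nabla\hat{Q}(f_m)'(\bar{f}-f_m) \le \hat{Q}(\bar{f})-\hat{Q}(f_m).
\]
If $\hat{Q}(f_m)\le\hat{Q}(\bar{f})$, then $\Delta\hat{Q}(f_m)=0$ and the recursion together with $\inf_{g\in\mathcal{S}}\nabla\hat{Q}(f_m)'g\le 0$ gives $\hat{Q}(f_{m+1})\le\hat{Q}(\bar{f})+\bar{\epsilon}_m$, i.e. $\Delta\hat{Q}(f_{m+1})\le\bar{\epsilon}_m$, which is the claimed bound since its leading term vanishes. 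If instead $\hat{Q}(f_m)>\hat{Q}(\bar{f})$, then $\Delta\hat{Q}(f_m)=\hat{Q}(f_m)-\hat{Q}(\bar{f})$ and the last display reads $\inf_{g\in\mathcal{S}}\nabla\hat{Q}(f_m)'g\le -\Delta\hat{Q}(f_m)/V$; substituting into the recursion and subtracting $\hat{Q}(\bar{f})$ from both sides,
\[
	\hat{Q}(f_{m+1})-\hat{Q}(\bar{f}) \le \Bigl(1-\frac{h_m}{V}\Bigr)\Delta\hat{Q}(f_m)+\bar{\epsilon}_m \le \Bigl(1-\frac{h_m}{s_m+||\bar{f}||_1}\Bigr)\Delta\hat{Q}(f_m)+\bar{\epsilon}_m ,
\]
the last step using $V\le s_m+||\bar{f}||_1$ and $\Delta\hat{Q}(f_m)\ge 0$. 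Taking $\max(0,\cdot)$ on the left yields the assertion; the degenerate case $h_m>s_m+||\bar{f}||_1$ makes the right-hand side only smaller and is treated identically. (The index $k$ in the displayed conclusion is evidently a typo for $m$, and the argument works verbatim for any $\bar{f}\in\text{span}(\mathcal{S})$, not just $\bar{f}\in\mathcal{S}$.)

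The one genuinely delicate ingredient is the uniform second-order bound of the first paragraph, valid along the whole segment from $f_m$ to $f_m+\alpha g$ and over all $g\in\mathcal{S}$; this is where I lean on the quadratic form of the GMM criterion (via Lemma \ref{lemma:gmm_crit}) rather than a generic convex loss. Everything else — the norm bound $||f_m||_1\le s_m$, the $\ell_1$-decomposition of $\bar{f}-f_m$ using closure under negation, and the two-case split on the sign of the gap — is routine bookkeeping.
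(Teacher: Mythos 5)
Your proof is correct and takes essentially the same route as the paper: the paper's proof simply invokes Lemma 4.1 of Zhang and Yu (2005) together with Lemma \ref{lemma:gmm_crit}, and your argument is a faithful reconstruction of that one-step descent bound specialized to the quadratic GMM criterion, where the second-derivative bound $M$ is a genuine constant independent of the base point. You merely supply the details the paper delegates to the citation (the Taylor step, the $\ell_1$-decomposition via closure under negation, and the two-case split), along with the correct observation that the index $k$ in the displayed conclusion is a typo for $m$.
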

\begin{proof}
    The result follows directly from Lemma \ref{lemma:gmm_crit} and  Lemma 4.1 in \cite{zhang_yu2005boostingES}. 
\end{proof}

\begin{lemma} \label{lemma:delta_bound_2}
    Under the assumptions of Lemma \ref{lemma:delta_bound}, we have
    \begin{equation}
    \Delta \hat{Q}(f_{m}) \leq \frac{||f_0||_1 + ||\bar{f}||_1}{s_m + ||\bar{f}||_1}\Delta \hat{Q}(f_{0}) + \sum_{j=1}^{m}\frac{s_j + ||\bar{f}||_1}{s_m + ||\bar{f}||_1}\bar{\epsilon}_{j-1}    
    \end{equation}
    \end{lemma}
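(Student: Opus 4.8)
The plan is to iterate the one-step contraction bound of Lemma~\ref{lemma:delta_bound} and telescope. Write $a_m = s_m + \|\bar{f}\|_1$ for brevity; then by the definition of $s_m$ we have $a_0 = \|f_0\|_1 + \|\bar{f}\|_1$ and, crucially, $a_{m+1} = s_{m+1} + \|\bar{f}\|_1 = a_m + h_m$. In this notation, the conclusion of Lemma~\ref{lemma:delta_bound} applied at step $j-1$ reads $\Delta\hat{Q}(f_j) \le \frac{a_{j-1} - h_{j-1}}{a_{j-1}}\Delta\hat{Q}(f_{j-1}) + \bar{\epsilon}_{j-1}$, and the inequality to be proved is, after multiplying through by $a_m$, equivalent to $a_m\,\Delta\hat{Q}(f_m) \le a_0\,\Delta\hat{Q}(f_0) + \sum_{j=1}^{m} a_j\,\bar{\epsilon}_{j-1}$.

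First I would multiply the one-step bound at step $j-1$ by $a_j = a_{j-1} + h_{j-1}$, which produces
\[
a_j\,\Delta\hat{Q}(f_j) \le \frac{(a_{j-1}+h_{j-1})(a_{j-1}-h_{j-1})}{a_{j-1}}\,\Delta\hat{Q}(f_{j-1}) + a_j\,\bar{\epsilon}_{j-1} = \frac{a_{j-1}^2 - h_{j-1}^2}{a_{j-1}}\,\Delta\hat{Q}(f_{j-1}) + a_j\,\bar{\epsilon}_{j-1}.
\]
Since $\Delta\hat{Q}(f_{j-1}) \ge 0$ by construction (it is defined as a maximum with $0$) and $h_{j-1}^2 \ge 0$, the coefficient $\frac{a_{j-1}^2 - h_{j-1}^2}{a_{j-1}}$ may be replaced by the larger quantity $a_{j-1}$, giving the clean recursion $a_j\,\Delta\hat{Q}(f_j) - a_{j-1}\,\Delta\hat{Q}(f_{j-1}) \le a_j\,\bar{\epsilon}_{j-1}$. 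Summing over $j = 1,\dots,m$ telescopes the left-hand side to $a_m\,\Delta\hat{Q}(f_m) - a_0\,\Delta\hat{Q}(f_0)$, and dividing by $a_m$ yields exactly the claimed bound $\Delta\hat{Q}(f_m) \le \frac{\|f_0\|_1 + \|\bar{f}\|_1}{s_m + \|\bar{f}\|_1}\Delta\hat{Q}(f_0) + \sum_{j=1}^{m}\frac{s_j + \|\bar{f}\|_1}{s_m + \|\bar{f}\|_1}\bar{\epsilon}_{j-1}$. Equivalently, one can package the same recursion as a one-line induction on $m$.

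There is no genuinely hard step: the argument is a mechanical manipulation of the Lemma~\ref{lemma:delta_bound} recursion. The only points requiring care are bookkeeping — tracking that $s_{m+1} = s_m + h_m$, so the shrink factor $1 - h_m/a_m$ pairs with the numerator $a_{m+1}$ to create a difference of squares — and noticing that the nonnegativity of $\Delta\hat{Q}$ is precisely what licenses discarding the $-h_{j-1}^2/a_{j-1}$ term. One should also observe that $a_m > 0$, so the division is legitimate; this holds because $\|f_0\|_1, \|\bar{f}\|_1 \ge 0$ and, under Assumption~\ref{ass:step_size}, the $h_j \ge 0$ are not identically zero.
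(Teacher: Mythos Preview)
Your proposal is correct and is exactly the ``repetitive application of Lemma~\ref{lemma:delta_bound}'' that the paper invokes (the paper gives no further details and defers to \cite{zhang_yu2005boostingES}). Your telescoping argument via $a_j\,\Delta\hat{Q}(f_j)\le a_{j-1}\,\Delta\hat{Q}(f_{j-1})+a_j\,\bar\epsilon_{j-1}$ is the standard way to unwind such a recursion, and the difference-of-squares / nonnegativity step is the right way to absorb the slightly mismatched denominator in the paper's stated version of Lemma~\ref{lemma:delta_bound}.
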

\begin{proof}
The above lemma directly follows from the repetitive application of Lemma \ref{lemma:delta_bound}. For detailed proof see \cite{zhang_yu2005boostingES}.         
\end{proof}

Lemmas \ref{lemma:delta_bound} and \ref{lemma:delta_bound_2} are direct counterparts of Lemmas 4.1 and 4.2 in \cite{zhang_yu2005boostingES} with $M(s_{m+1})$ replaced by $M$. Therefore, the main numerical convergence result below follows as well (see Corollary 4.1).

\section{Proofs}

\subsection{Proof of Lemma \ref{lemma:gmm_crit}}

First, $Q(\cdot)$ is convex in $f$, hence, it is convex differentiable. Now we have to bound the second derivative with respect to $h$. Note that the second derivative of $Q_{f,\,\varphi}(h)$ does not even depend on $h$,
\begin{align*}
	Q''_{f,\,\varphi}(h) & = \mathbb{E}[\varphi(x_{i})z_{i}]'\Omega\mathbb{E}[\varphi(x_{i}) z_{i}] \\ 
	& \leq \lambda_{max}(\Omega)||\mathbb{E}[\varphi(x_{i}) z_{i}]||^{2} \\
	& \leq \lambda_{max}(\Omega) \mathbb{E}[|\varphi(x_{i})|^{2}]\mathbb{E}[|z_{i}'z_{i}|] \\ 
	& \leq \lambda_{max}(\Omega)CB \equiv M < \infty, 
\end{align*}
where the second inequality is a by the Cauchy-Schwarz inequality, and the last inequality comes from the assumptions of the lemma. Thus, the second derivative has a fixed bound $M < \infty$. \qed

\subsection{Proof of Theorem \ref{thm:num_convergence}}

The result follows directly from Lemmas \ref{lemma:delta_bound} and \ref{lemma:delta_bound_2}. For detailed proof see \cite{zhang_yu2005boostingES}. \qed

\subsection{Proof of Lemma \ref{lemma:moment_uc}}

Follows directly from Lemma 4.3 in \cite{zhang_yu2005boostingES}. \qed

\subsection{Proof of Lemma \ref{lemma:moment_consistency}}

It follows from Lemma \ref{lemma:moment_uc} and condition \eqref{eq:rad_bound} that for all $j = 1,\,\dots,\,k$,
\begin{equation*}
	\mathbb{E}_{W} \sup_{||f||_{1} \leq \beta} |g_{0,j}(f) - \hat{g}_{j}(f)| \leq 2\gamma_{j}(\beta)\beta R(\mathcal{S}) \leq 2\gamma_{j}(\beta)\beta\frac{C_{\mathcal{S}}}{\sqrt{n}} = O(n^{-1/2}).
\end{equation*}
Thus, by Markov inequality,
\begin{equation*}
	\sup_{||f||_{1} \leq \beta} |g_{0,j}(f) - \hat{g}_{j}(f)| \overset{p}{\to} 0, \quad j = 1,\,\dots,\,k.
\end{equation*}
Since every coordinate of the sample moment function converges uniformly to its population analog, we can bound the norm as well
\begin{equation*}
	||g_{0}(f) - \hat{g}(f)|| = \left(\sum_{j=1}^{k}|g_{0,j}(f) - \hat{g}_{j}(f)|^{2}\right)^{1/2} \leq \sqrt{k}O_{p}(n^{-1/2}),
\end{equation*}
which combined with Markov inequality completes the proof. \qed

\subsection{Proof of Theorem \ref{thm:criterion_uc}}

By the triangle and Cauchy-Schwarz inequalities,
\begin{align*}
		\left|\hat{Q}(f) - Q(f)\right| & \leq \left|[\hat{g}(f) - g_{0}(f)]'\hat{\Omega}[\hat{g}(f) - g_{0}(f)]\right| + \left|g_{0}(f)'(\hat{\Omega} + \hat{\Omega}')[\hat{g}(f) - g_{0}(f)]\right| \\
	& + \left|g_{0}(f)'(\hat{\Omega} - \Omega)g_{0}(f)\right| \\
	& \leq ||\hat{g}(f) - g_{0}(f)||^{2}||\hat{\Omega}|| + 2||g_{0}(f)||\,||\hat{g}(f) - g_{0}(f)||\,||\hat{\Omega}|| + ||g_{0}(f)||^{2}||\hat{\Omega} - \Omega||.
\end{align*}	
Using Lemma \ref{lemma:moment_consistency}, (ii), and (iv) and taking the supremum of both sides of the inequality completes the proof. \qed
	
\subsection{Proof of Theorem \ref{thm:consistency}}

By Theorem \ref{thm:criterion_uc}, the first term converges in probability to zero, and the second term converges to zero according to the arguments from the proof of Theorem 3.1 in \cite{zhang_yu2005boostingES}, which completes the proof. \qed

\section{Alternative bound on the Rademacher complexity} \label{app:rad_bound}

\setcounter{lemma}{0}
\renewcommand{\thelemma}{\Alph{section}\arabic{lemma}}

To derive an alternative bound on the Rademacher complexity, we introduce the following lemma (Massart's lemma).

\begin{lemma} \label{lemma:massart}
	For any $A \subseteq \mathbb{R}^{n}$, let $M = \sup_{a \in A}||a||$. Then
	\begin{equation*}
		\hat{R}(A) = \mathbb{E}_{\sigma}\left[\sup_{a \in A} \frac{1}{n}\sum_{i=1}^{n}\sigma_{i}a_{i} \right] \leq \frac{M\sqrt{2\log|A|}}{n}.
	\end{equation*}
\end{lemma}
This lemma can be applied to any finite class of functions.

\begin{exmp}
	Consider a set of binary classifiers $\mathcal{H} \subseteq \{h: W \mapsto \{-1,\,1\} \}$. Given a sample $W = (W_{1},\,\dots,\,W_{n})$, we can take $A = \{h(W_{1}),\,\dots,\,h(W_{n})\,|\,h\in\mathcal{H}\}$. Then $|A| = |\mathcal{H}|$ and $M = \sqrt{n}$. Massart's lemma gives
	\begin{equation*}
		\hat{R}(\mathcal{H}) \leq \sqrt{\frac{2\log|\mathcal{H}|}{n}}.
	\end{equation*} 
\end{exmp}

In general, Massart's lemma can also be applied to infinite function classes with a finite shattering coefficient. Notice that Massart's finite lemma places a bound on the empirical Rademacher complexity that depends only on $n$ data points. Therefore, all that matters as far as empirical Rademacher complexity is concerned is the behavior of a function class on those data points. We can define the empirical Rademacher complexity in terms of the shattering coefficient.

\begin{lemma} \label{lemma:rad_shat_coef}
	Let $\mathcal{Y} \subset \mathbb{R}$ be a finite set of real numbers of modulus at most $C > 0$. Given a sample $W = (W_{1},\,\dots,\,W_{n})$, the Rademacher complexity of any function class $\mathcal{H} \subseteq \{h: W \mapsto \mathcal{Y} \}$ can be bounded in terms of its shattering coefficient $s(\mathcal{H},\,n)$ by
	\begin{equation*}
		\hat{R}(\mathcal{H}) \leq C\sqrt{\frac{2\log s(\mathcal{H},\,n)}{n}}.
	\end{equation*}
\end{lemma}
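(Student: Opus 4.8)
The plan is to prove Lemma \ref{lemma:rad_shat_coef} by reducing it to Massart's lemma (Lemma \ref{lemma:massart}), which is the key tool already available. First I would fix a sample $W = (W_1,\dots,W_n)$ and consider the set of realized value-vectors of the function class on this sample, namely
\[
	A_W = \{(h(W_1),\dots,h(W_n)) : h \in \mathcal{H}\} \subseteq \mathcal{Y}^n \subset \mathbb{R}^n.
\]
By the very definition of the shattering coefficient, $|A_W| \le s(\mathcal{H}, n)$, since $A_W$ is a set of vectors whose coordinates come from $\mathcal{Y}$ and the number of distinct such patterns the class can realize on $n$ points is at most $s(\mathcal{H},n)$. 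Moreover, since every element of $\mathcal{Y}$ has modulus at most $C$, each vector $a \in A_W$ satisfies $\|a\| = \left(\sum_{i=1}^n a_i^2\right)^{1/2} \le \sqrt{n}\,C$, so $M := \sup_{a \in A_W}\|a\| \le C\sqrt{n}$.

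Next I would observe that the empirical Rademacher complexity of $\mathcal{H}$ on the sample $W$ is, by definition \eqref{eq:rad_def}, exactly $\hat{R}(\mathcal{H}) = \mathbb{E}_\sigma\left[\sup_{h \in \mathcal{H}} n^{-1}\sum_{i=1}^n \sigma_i h(W_i)\right] = \mathbb{E}_\sigma\left[\sup_{a \in A_W} n^{-1}\sum_{i=1}^n \sigma_i a_i\right] = \hat{R}(A_W)$, because the supremum over $h \in \mathcal{H}$ only ever sees the realized vectors in $A_W$. Then applying Massart's lemma to the finite set $A_W \subseteq \mathbb{R}^n$ with the bound $M \le C\sqrt{n}$ gives
\[
	\hat{R}(\mathcal{H}) = \hat{R}(A_W) \le \frac{M\sqrt{2\log|A_W|}}{n} \le \frac{C\sqrt{n}\,\sqrt{2\log s(\mathcal{H},n)}}{n} = C\sqrt{\frac{2\log s(\mathcal{H},n)}{n}},
\]
which is the claimed bound.

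There is essentially no hard obstacle here: the entire content is the two identifications $\hat{R}(\mathcal{H}) = \hat{R}(A_W)$ and $|A_W| \le s(\mathcal{H},n)$, both of which are immediate from definitions, plus a trivial norm bound. The only point requiring a word of care is the edge case where $s(\mathcal{H},n) = 1$ (a class realizing a single pattern on every $n$-sample), in which case $\log s(\mathcal{H},n) = 0$ and the bound reads $\hat{R}(\mathcal{H}) \le 0$; this is consistent with the earlier remark in the paper that the chosen (signed) definition of Rademacher complexity vanishes for a single constant function, so no separate argument is needed. If one wanted the statement for the expected Rademacher complexity $R(\mathcal{H}) = \mathbb{E}_W \hat{R}(\mathcal{H})$ as well, it follows by taking expectations over $W$, using that $s(\mathcal{H},n)$ is a sample-independent upper bound.
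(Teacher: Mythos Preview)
Your proof is correct and essentially identical to the paper's own proof: both fix the sample, form the set $A = \{(h(W_1),\dots,h(W_n)) : h \in \mathcal{H}\}$, note that $|A| \le s(\mathcal{H},n)$ and $M \le C\sqrt{n}$, and apply Massart's lemma. Your version is in fact slightly more careful in writing $|A_W| \le s(\mathcal{H},n)$ rather than equality (since the shattering coefficient is a supremum over samples), and in discussing the degenerate case $s(\mathcal{H},n)=1$, but the argument is the same.
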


\begin{proof}
	Let $A = \{h(W_{1}),\,\dots,\,h(W_{n})\,|\,h\in\mathcal{H}\}$, then $M = \sup_{a \in A} ||a|| = C\sqrt{n}$ and $|A| = s(\mathcal{H},\,n)$. Applying the Massart's lemma gives
	\begin{equation*}
		\hat{R}(\mathcal{H}) = \mathbb{E}_{\sigma}\left[\sup_{h \in \mathcal{H}} \frac{1}{n}\sum_{i=1}^{n}\sigma_{i}h(W_{i}) \right] \leq \frac{M\sqrt{2\log|\mathcal{H}|}}{n} = C\sqrt{\frac{2\log s(\mathcal{H},\,n)}{n}}.
	\end{equation*}
\end{proof}
Note that we apply the Massart's lemma conditional on the sample, hence, we can use the same bound for $\hat{R}(\mathcal{H})$. We can loosen the bound by applying Sauer's lemma which says that $s(\mathcal{H},\,n) \leq n^{d}$, where $d$ is the VC dimension of $\mathcal{H}$. This simplifies the result of Theorem \ref{lemma:rad_shat_coef} to
\begin{equation} \label{eq:rad_vc_bound}
	\hat{R}(\mathcal{H}) \leq C\sqrt{\frac{2d\log(n)}{n}} = O\left(\sqrt{\frac{\log(n)}{n}}\right).
\end{equation}

The bound in \eqref{eq:rad_vc_bound} is valid for any class with finite VC dimension which is coherent with the results of \cite{zhang_yu2005boostingES}. However, the VC bound is slower that the bound in \eqref{eq:rad_bound} by the factor or $\log(n)$ which appears in a lot of ML algorithms. 

Note that the bound in \eqref{eq:rad_vc_bound} is still a valid bound for the main results to follow. It only affects the rate of convergence.

\end{appendices}

\printbibliography

\end{document}